\newcolumntype{L}[1]{>{\raggedright\let\newline\\\arraybackslash\hspace{0pt}}m{#1}}
\newcolumntype{C}[1]{>{\centering\let\newline\\\arraybackslash\hspace{0pt}}m{#1}}
\newcolumntype{R}[1]{>{\raggedleft\let\newline\\\arraybackslash\hspace{0pt}}m{#1}}
\newcommand{\cache}{\mathcal{I}}
\newcommand{\new}{\mathcal{N}}
\newcommand{\ind}[1]{\mathbb{I}_{\{#1\}}}
\newcommand{\E}{\mathbb{E}}
\newcommand{\EE}[1]{\E\left[#1\right]}
\newcommand{\Pchar}{\mathbb{P}}
\newcommand{\PP}[1]{\Pchar\left[ #1 \right]}
\newcommand{\EEcond}[2]{\E\left[\left.#1\right|#2\right]}
\newcommand{\PPcond}[2]{\Pchar\left[\left.#1\right|#2\right]}
\renewcommand{\S}{\mathcal{S}}
\newcommand{\A}{\mathcal{A}}
\newcommand{\G}{\mathcal{G}}
\newcommand{\N}{\mathbb{N}}
\newcommand{\R}{\mathbb{R}}
\newcommand{\rel}{\mathcal{O}}
\newcommand{\New}{\mathcal{N}}
\newcommand{\Value}{\mathcal{V}}
\newcommand{\Z}{\mathcal{Z}}
\newcommand{\paras}{{\theta}}
\newcommand{\para}{{\boldsymbol\paras}}
\newcommand{\thresh}{\mathcal{T}}
\newtheorem{theorem}{Theorem}
\newtheorem{lemma}{Lemma}
\newtheorem{definition}{Definition}
\newtheorem{corollary}{Corollary}
\newtheorem{remark}{Remark}
\DeclareMathOperator*{\argmin}{argmin}
\begin{document}
\title{A Reinforcement-Learning Approach to Proactive Caching in Wireless Networks\thanks{The authors are with the Department of Electrical and Electronic Engineering, Imperial College London, UK.
Email: \{samuel.somuyiwa12, a.gyorgy, d.gunduz\}@imperial.ac.uk.} \thanks{Parts of this work were presented at the Int'l Symp. on Modeling and Optim. in Mobile, Ad Hoc, and Wireless Nets. (WiOpt), and at the 2nd Content Caching and Delivery in Wireless Nets. Workshop (CCDWN), Paris, France, May 2017 \cite{SoGyGu17,SoGyGu17b}.}} 

\author{\IEEEauthorblockN{Samuel O. Somuyiwa, Andr\'as Gy\"orgy and Deniz G\"und\"uz}

}

\maketitle

\vspace{-.6in}
\begin{abstract}
We consider a mobile user accessing contents in a dynamic environment, 
where new contents are generated over time (by the user's contacts), and remain relevant to the users for random lifetimes. The user, equipped with a finite-capacity cache memory, randomly accesses the system, and requests all the relevant contents at the time of access. The system incurs an energy cost associated with the number of contents downloaded and the channel quality at that time. Assuming causal knowledge of the channel quality, the content profile, and the user-access behavior, we model the proactive caching problem as a Markov decision process with the goal of minimizing the long-term average energy cost. We first prove the optimality of a threshold-based proactive caching scheme, which dynamically caches or removes appropriate contents from the memory, prior to being requested by the user, depending on the channel state. The optimal threshold values depend on the system state, and hence, are computationally intractable. Therefore, we propose parametric representations for the threshold values, and use reinforcement-learning algorithms to find near-optimal parametrizations.
We demonstrate through simulations that the proposed schemes significantly outperform classical reactive downloading, and perform very close to a genie-aided lower bound. 
\begin{IEEEkeywords}
Proactive content caching, Markov decision process, reinforcement learning, policy gradient methods, wireless networks.
\vspace{-0.5cm}
\end{IEEEkeywords}
\end{abstract}

\IEEEpeerreviewmaketitle




\section{Introduction}
\label{sec:introduction}

Content delivery networks (CDNs), such as Amazon Web Service (AWS) and Akamai, replicate contents from a local repository at servers that are geographically closer to users; specifically, at Internet exchange points or Internet service providers. This approach significantly improves utilization of the Internet ``backbone'' capacity, thereby reducing latency and improving reliability \cite{CDN33}. However, today a large proportion of high-rate contents, e.g., videos, are delivered to users through cellular/wireless networks, which may introduce bottlenecks. 
Researchers have recently proposed \emph{proactive caching} of contents at the edge of the wireless network, that is, at the micro/macro base stations (BS) and/or even directly at user equipments (UEs) as a method to extend the idea of CDN all the way to the wireless network edge. 
Proactive caching is particularly appropriate for prerecorded contents, for example, YouTube videos or user generated contents in online social networks (OSNs), and is based on the assumption that the system knows/predicts in advance which contents are likely to be requested by the users. 

In \cite{FemtoCaching:INFOCOM, 6LivingontheEdge, Blasco:2014} proactive caching of contents at wireless access points is considered to reduce congestion in back-haul links. Caching at macro and micro BSs is modeled as a stochastic optimization problem in \cite{zhou_stoch_jrnl:2016}, with the objective of minimizing the average transmission power and delay in an heterogeneous wireless network. 
Proactive caching of contents directly at user devices has also been studied. Downlink energy efficiency of proactive caching is addressed in \cite{offlinecache:2015} and \cite{Gregori:JSAC:16}, which study the problem in the offline setting, that is, user demands and channel conditions are known in advance. Offline proactive caching is also studied in \cite{bitrateAdap} where, video bit-rate adaptation is used to improve the video streaming performance. Both offline and online proactive caching is considered in \cite{VincentPoor} to improve the effective throughput (hit-rate) given random user requests and a limited cache capacity. However, these works do not take into account the time-varying nature of contents, particularly in the context of OSNs, where content generation varies with time (and can also be bursty \cite{OSN_traffic:2016}), and the popularity of each content is non-stationary, such that content popularity typically diminishes soon after the content is generated \cite{Socialmedia_1:2013}. For example, the average \emph{lifetime}, that is, the period of time a content remains popular is approximately 2 hours for a video \emph{posted} on Facebook , and 18 minutes for a video \emph{tweeted} on Twitter \cite{M2Mconsulting:2016}. Instead, most caching schemes in the literature make caching decisions based on a static content-popularity profile and a fixed content library, which
results in performance degradation. For example, it is shown in \cite{Socialmedia_2:2014} that service delay increases, and cache-hit ratio decreases, when caching decisions for social media contents are done without taking the time variations in popularity into consideration. Time variations in the wireless channel quality and traffic conditions, together with variations in 
the lifetime and popularity of contents, require that proactive caching must incorporate an intelligent content placement and cache update mechanism.

In this paper, we consider proactive content caching into a mobile UE in the framework of an OSN, such that new contents (messages, videos, pictures), posted by the user's connections, become available over time. Each content has a lifetime, that is, the period of time it remains \emph{relevant} to the user. Contents are delivered to the user via wireless links, causing the serving BS to incur a transmission energy cost\footnote{Although we focus on the energy cost at the BS in this work, the proposed framework can be easily adapted to any other network resource, e.g., bandwidth, delay or the energy cost at the UE.}. This energy cost depends on the amount of content downloaded as well as the channel and network conditions, which typically vary over time due to traffic, user mobility, pathloss, as well as large scale fading effects. 

In \emph{reactive} content delivery, which is the conventional method employed in wireless networks today, every time the user accesses the OSN through an application software (\emph{app}) on his mobile device, all the contents that have not yet expired, that is, those that are still \textit{relevant}, are downloaded to the UE. The energy cost of reactive delivery depends on the channel conditions at the time of user access and the number of contents downloaded. Alternatively, \textit{proactive caching} allows downloading contents to the UE before the user accesses the OSN to request these contents. Downloaded contents are stored in the cache, and are retrieved and delivered to the application layer, i.e., to the app, whenever the user accesses the OSN. Therefore, contents can be downloaded under more favorable channel conditions, leading to reduced energy consumption. On the other hand, proactive caching may push contents that will not be requested by the user within their lifetimes, creating unnecessary energy consumption. Moreover, the limited cache capacity at the UE limits the amount of contents that can be proactively cached, or may require replacing already downloaded contents, increasing the cost. Hence, we aim to answer the question of which contents, and at what time, should be pushed to the cache. 

We consider a slotted time model, in which a random number of relevant contents are generated with random lifetimes at each time slot. 
For simplicity, we assume that all contents have equal size, which is without loss of generality if we assume that larger contents are split into smaller chunks of equal size e.g., video segments in DASH. We model both the channel quality and the user behavior as stochastic processes. The user randomly accesses the contents in her OSN feed in order to view/consume all the relevant contents at the time of access. We will initially assume that the statistics of the underlying stochastic processes are known, but we will later propose reinforcement-learning algorithms that can learn and adapt to an unknown environment. 

Our contributions are summarized as follows.
\begin{itemize}
\item We formulate the problem as an infinite-horizon average-cost Markov decision process (MDP) \cite{Bertsekas:book}, with the objective of
minimizing the long term average energy consumption.  

\item To overcome the technical difficulty due to the continuous distribution of the channel quality, we introduce a new MDP model, referred to as an MDP with side information (MDP-SI). 
We show the optimality of a threshold-based proactive caching policy, which downloads contents into, and removes contents from, the cache
depending on the remaining lifetime of the contents and the relative value of the current channel state with respect to a threshold. 

\item Since the optimal threshold values depend on the system state, the prohibitively large size of the state space makes it practically infeasible to compute and store them. Hence, we introduce two low-complexity parametric policy representations that are able to approximate the optimal performance. 
The first policy, called \emph{Longest lifetime In--Shortest lifetime Out (LISO)}, assigns a single threshold to each pair of contents with the longest remaining lifetime outside the cache, and content with the shortest remaining lifetime inside the cache, independent of the system state. 
The second policy, called linear function approximation (LFA), represents the threshold values for every possible pair of remaining lifetimes as a linear function of the system state. 


\item We use reinforcement learning techniques obtain the optimal threshold values for the proposed caching schemes. In particular, we apply two policy gradient schemes,  he finite difference method (FDM) and the likelihood-ratio method (LRM) \cite{Policysearch}.  

\item To evaluate the performances of the proposed caching schemes, we describe two lower bounds: one assuming unlimited cache capacity, and another assuming non-causal knowledge of the user-access times. Through numerical simulations, we demonstrate that the proposed schemes perform close to the latter lower bound when the cache capacity is small, and to the former for larger cache capacities, and significantly outperform reactive caching.
We also show that the LFA policy outperforms LISO to some considerable extent, 
and that the PG with LRM finds a better solution than with FDM.

\item Finally, we introduce memory into the stochastic processes for content (lifetime) generation and channel quality, and show via simulations that the performance gain of the LFA policy over LISO is more significant in such scenarios.
\end{itemize}

\section{System Model}
\label{sec:system_model}

We consider a slotted (discrete time) communication system. At the beginning of each time slot $t$, a random number of contents, denoted by $M_t$, are generated. We denote the set of newly generated contents by $\New_t$, where $|\New_t|=M_t$. Each content is generated with a lifetime, after which it becomes irrelevant. In particular, if, at the beginning of time slot $t$, content $i$ is generated with lifetime $K_{t,i}$, it can be consumed in time slots $t, t+1,\ldots,t+K_{t,i}-1$, and otherwise will be removed from the system after time slot $t+K_{t,i}-1$. We denote the set of contents that are already in the cache at the beginning of time slot $t$ by $\cache_{t}$, and the set of relevant contents not inside the cache, including the $M_t$ newly generated contents, by $\rel_{t}$. The system architecture is illustrated in Fig. \ref{fig:sys_architecture}.

\begin{figure}[t]
\centering
\includegraphics[scale=0.4]{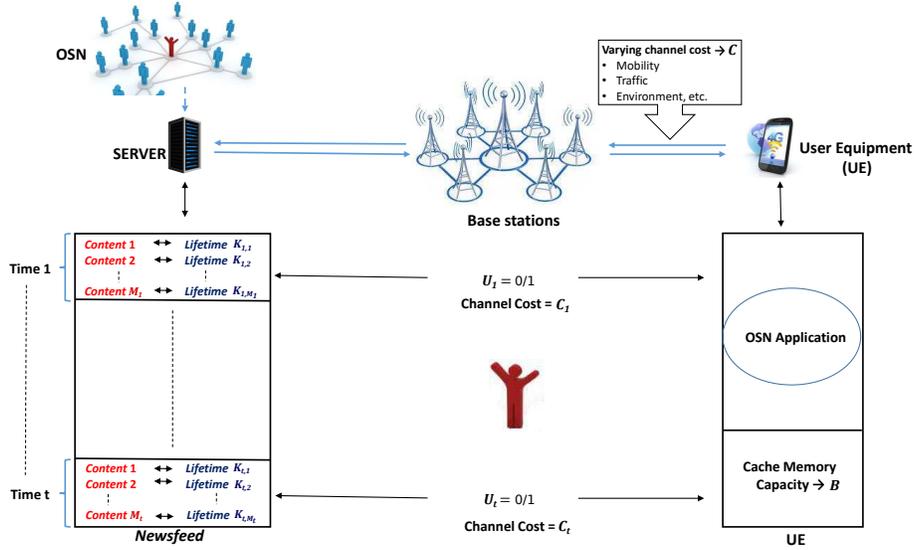}
\caption{Illustration of the system architecture. The OSN server has a list of relevant contents with random lifetimes for the user. Contents can be pushed to the UE cache memory before being requested by the user, to take advantage of favorable channel conditions.}
\label{fig:sys_architecture}
\vspace{-0.5cm}
\end{figure}

At each time slot, the user either accesses the system and consumes all the relevant contents, or does not access the system. The user access behavior is represented by the binary random variable $U_t$; that is, $U_t=1$ if the user accesses the system, and $U_t=0$ otherwise. When the user accesses the system, all the contents that are not already in the cache, $\rel_{t}$, are downloaded, and moved, together with all the contents in the cache, that is, $\cache_t$, to the app. If $U_t=0$, the cache manager (CM)
has the option of downloading some contents to, and removing others from the cache. We denote the set of contents that are downloaded at time slot $t$ by $A^{(1)}_t \subset \rel_{t}$, and the set of contents that are discarded from the cache by $A^{(2)}_t \subset \cache_{t}$. To unify notation, if $U_t=1$ we set $A^{(1)}_t=\rel_{t}$ and $A_t^{(2)} = \cache_{t}$.\footnote{We do not allow the CM to download and remove the same content in the same time slot, which is obviously suboptimal.}

Throughout the paper, since all the contents have the same size, it will be convenient to represent each content by its remaining lifetime. 
Following this representation, all sets of contents, that is, $\New_t$, $\rel_t$, $\cache_t$, $A^{(1)}_t$ and $A^{(2)}_t$, are multisets of remaining lifetimes (positive integers, with the set of all positive integer tuples denoted by $\N^*$). To simplify the treatment, when it does not cause confusion, we will only talk about sets instead of multisets, or subsets instead of sub-multisets of multisets, and operations, such as union, should be treated in a multiset manner. For a multiset $Z$ with positive elements, we let $Z-1=\{z > 0: z+1 \in Z\}$ denote the multiset obtained by reducing each element of $Z$ by $1$ and removing the elements which become $0$. With these definitions in mind, if $U_t=0$, the system evolves according to the following equations:
\begin{equation}
\begin{split}
\cache_{t+1} &= \big(\cache_t \cup A^{(1)}_t \setminus A^{(2)}_t\big) -1, \\
\rel_{t+1} &= \bigg(\big(\rel_t \cup A^{(2)}_t \setminus A^{(1)}_t\big) -1\bigg) \cup \New_{t+1},
\label{eq:set_noaccess}
\end{split}
\end{equation}
and according to the following equations if $U_t=1$:
\begin{align}
\cache_{t+1} = \emptyset \quad \mbox{and} \quad \rel_{t+1} &= \New_{t+1}.
\label{eq:set_access}
\end{align}


We assume that the user is equipped with a cache of capacity $B$, that is, $|\cache_t| \le B$, $\forall t$. Hence, the CM's actions, $A_t=(A^{(1)}_t,A^{(2)}_t)$, are constrained by the available cache capacity, and any valid action leads to a new state with $|\cache_t|\le B$.

Downloading a content at time $t$ has a cost $C_t$ that depends on the channel state. The total instantaneous cost at time $t$ is
$\mu_t = |A^{(1)}_t| \cdot  C_t$,
while the average cost after $T$ time slots is given by
$J_T = \frac{1}{T}\sum_{t=1}^{T} \mu_t$.
The goal is to minimize the long-term expected average cost defined as
\begin{equation*}
\rho \triangleq \limsup_{T\to\infty} \EE{J_T} = \limsup_{T\to\infty} \EE{\frac{1}{T}\sum_{t=1}^{T} \mu_t}.
\end{equation*}

\subsubsection{User Access Model}
We assume that the user access sequence $\{U_t\}$ is an arrival process with i.i.d. inter-arrival times $\{D_n\}$, where $D_n$ denotes a positive-integer-valued random variable. 
Throughout, we will make one of the two assumptions regarding $D_n$: (i) Bounded inter-arrival times, i.e., $D_n$ is bounded as $0\le D_n \le D_{max}$, where $D_{max} \in \mathbb{Z}^+$. (ii) Geometric inter-arrival times: $D_n$ has a geometric distribution with parameter $p_a$; hence $\{U_t\}$ is an i.i.d. process with $\PP{U_t = 1} = p_a$ (in turn, $D_{\max}=\infty$ in this case). The latter assumption, which is standard in the literature, is also known as the \emph{independent reference model} (IRM) (see, e.g., \cite{FemtoCaching:INFOCOM,zhou_stoch_jrnl:2016}).

\subsubsection{Content Generation Model}
We assume that $\{M_t\}$ is an i.i.d. sequence with generic random variable $M$, and is upper-bounded by $M_{max}\in \mathbb{Z}^+$. We further assume that the lifetimes are also i.i.d. with generic random variable $K$, and upper-bounded by $K_{\max}\in \mathbb{Z}^+$. 

\subsubsection{Channel Model}
We assume that the energy cost for downloading a content $C_t > 0$ is a continuous random variable with cumulative distribution function (cdf) $F_C(c)$, and it is assumed to be i.i.d. across time, and bounded by $C_{max}\in \R^+$. Aside from simplifying our system model, the i.i.d. assumption here is appropriate for micro BS deployments, where the user switches micro BSs across time slots.
We assume that the micro BSs can operate at the same time without any interference because they operate at a relatively low transmit power.
We also assume zero download delay \cite{zhou_stoch_jrnl:2016,VincentPoor}, implying that the duration of a time slot is long enough to download the required contents. Hence, the channel is approximately ergodic within a time slot, and is only subject to large-scale fading effects.

In the rest of the paper, we assume that the sequences $\{C_t\},\{D_n\},\{M_t\}$,$\{K_{t,i}\}$ are independent of each other.
In the following section, we assume that the CM is aware of the above stochastic model governing the system behavior.

\section{Optimal solution}
\label{sec:optimal}

In this section we derive a general result concerning the structure of the optimal cache management policy. First we define a special class of MDPs, which we call MDPs with side information (MDP-SI), and show that our cache management problem is an instance of this class. Then, we derive a general structural result for optimal policies in MDP-SI under some assumptions, and show that they apply to our problem.

\subsection{Standard MDP model}
\label{sec:MDP}
A finite-state finite-action MDP is characterized by a quadruple $(\S,\A,P,\mu)$, where $\S$ and $\A$, the state and action spaces, respectively, are finite sets, $P:\S \times \A \times \S \to [0,1]$ is a probability kernel (we will write $P(s'|s,a)$ instead of $P(s,a,s')$), and $\mu:\S \times \A \to [0,\mu_{max}]\cup\{\infty\}$ is a cost function with some $\mu_{max}>0$. 
The purpose of introducing an infinite cost is to allow a different action set in every state without complicating the notation too much: for every state $s \in \S$, the set $\A_s=\{a \in \A: \mu(s,a)<\infty\}$ denotes the set of feasible actions (otherwise the agent suffers infinite cost), and we assume that $\A_s \neq \emptyset$, $\forall s \in \S$. 
In an MDP, an agent controls a Markov chain and pays some cost over time. Assuming the agent selects an action $a \in \A_s$ at state $s \in \S$, the system evolves to state $s'$ with probability $P(s'|s,a) \triangleq \PPcond{S_{t+1} = s'}{S_t = s, A_t = a}$, where $\sum_{s'\in\S}P(s'|s,a)=1, \forall s\in\S,a\in\A$. The cost of taking action $a$ in state $s$ is $\mu(s,a)$. Denoting the state of the system at time $t$ by $S_t$ and the agent's action by $A_t$, the agent's goal is to minimize the infinite horizon average cost $\rho=\lim_{T\to\infty}\EE{\frac{1}{T}\sum_{t=1}^T \mu(S_t,A_t)}$.

A deterministic policy is a mapping $\pi:\S \to \A$, which selects a single action for each state; and let $\Pi$ denote the set of all deterministic policies.
For any policy $\pi$ let $P^\pi:\S\times\S \to [0,1]$ denote the transition kernel induced by $\pi$, that is $P^\pi(s'|s)=P(s'|s,\pi(s))$. Assuming the Markov chain defined by $P^\pi$ is irreducible and aperiodic for all $\pi$, let $\rho^\pi$ denote the infinite-horizon average cost $\rho$ when $A_t=\pi(S_t)$, that is,
\begin{equation}
\rho^{\pi} = \lim_{T\to\infty}\EE{\frac{1}{T}\sum_{t=1}^T \mu(S_t,\pi(S_t))}.
\label{eq:rho_pi}
\end{equation}
Due to our assumption on $P^\pi$, the initial state $S_0$ does not matter, and the limit in \eqref{eq:rho_pi} exits thanks to the non-negativity assumption on $\mu(s,a)$. It is well-known (see, e.g., \cite{Puterman:book}) that there exists a deterministic policy $\pi^*$ that minimizes the infinite-horizon average cost over all, possibly non-stationary and non-deterministic causal control policies, that is,
\begin{equation}
\pi^* = \argmin_{\pi} \rho^{\pi},
\label{eq:arg_opt_policy}
\end{equation}
where the minimum is taken over all admissible (causal) control strategies of the agent, in which $A_t$ may depend on the history $H_t \triangleq (S_1,\ldots,S_t,A_1,\ldots,A_{t-1})$ and some randomization.

\subsection{MDPs with side information (MDP-SI)}
\label{sec:MDPSI}
In the MDP-SI model, we extend the classical MDPs such that there is an i.i.d. sequence of side information $Z_t \in \Z$ for some $\Z \subset \R$, which is available to the agent before selecting $A_t$, and effects the cost $\mu$, that is, $\mu: \S\times\A\times\Z \to [0,\mu_{max}]\cup\{\infty\}$. Then the decision of the agent may depend on $H_t$, the randomization, and $(Z_1,\ldots,Z_t)$. This setup can be easily modeled in the MDP framework by changing the state space to $\S \times \Z$, but if $\Z$ is not finite, the analysis of the resulting MDP is significantly more complicated. Before delving into the analysis of the MDP-SI model, first we show that our problem can be cast as an MDP-SI problem.

At the end of time slot $t$, the state of the contents can be described by the sets $\cache_t$ and $\rel_t$, while the state of the user can be described by the time elapsed since the last access, denoted by $E_t$. To be precise, we assume that the user accesses the OSN at time $t=0$ (i.e., we set $U_0=1$); then $E_t$ is defined as $E_t \triangleq \min\{t-n: t>0, 0\le n \le t, U_n=1\}$. We denote by $\S \subset \N^* \times \N^* \times \N$ the set of all possible combinations of $\rel_t, \cache_t$, and $E_t$. Under the IRM user-access model (i.e., when the user-access process is i.i.d. and the inter-access times are geometrically distributed), the memoryless property of the geometric distribution implies that the exact value of $E_t$ does not affect the future given that $E_t>0$, and hence, the state of the user can be redefined as $\ind{E_t>0}$, the indicator function of the event $\{E_t>0\}$.%
\footnote{For an event $\mathcal{E}$, $\ind{\mathcal{E}}=1$ if $\mathcal{E}$ holds, and $0$ otherwise.} 
Unless otherwise stated explicitly, we will use $\ind{E_t}$ in place of $E_t$ for the IRM model; accordingly, $\S$ will denote the possible combinations of $\rel_t, \cache_t$, and $\ind{E_t>0}$. Note that under both of our user-access models (i.e., IRM or bounded inter-access times--$D_{\max}<\infty$), the state space $\S$ is finite. Furthermore, let $\A_s$ denote the set of download/discard actions available to the CM in a state $s\in \S$. The action of the agent in time slot $t$ is the pair $A_t=(A^{(1)}_t,A^{(2)}_t)$, and $C_t$ can be regarded as the i.i.d. side information $Z_t$. Indeed, the decision of the CM (i.e., the agent) depends on $C_t$, as the cost of action $A_t$ is $\mu(S_t,A_t,C_t)=C_t \cdot |A^{(1)}_t|$. 

The state $s \in \S$ of the system evolves according to \eqref{eq:set_noaccess} and \eqref{eq:set_access}, where the user access sequence depends on $E$, which evolves independently according to the distribution of $D_n$. The channel cost $C_t$, which is the side information, also evolves independently, with cdf $F_C$ in every time slot $t$. These independence assumptions ensure that the resulting model is indeed an MDP-SI.

\subsection{Structure of the optimal policy in MDP-SI}

In this section we derive the structure of the optimal policy for a general MDP-SI under certain conditions. To begin with, assume we have an MDP-SI characterized by $(\S,\A_{SI},P_{SI},\mu_{SI},\Z,F_Z)$, where $F_Z$ is the cdf of the real-valued side information, $\S$ and $\A_{SI}$ are countable, and $\Z \subset \R$. Let $\A$ denote the set of Borel-measurable\footnote{While throughout the paper we assume the existence of the necessary probability spaces and the measurability of functions as required, here we emphasize Borel measurability since it will be used explicitly in some proofs.} functions $\{g:\Z \to \A_{SI}\}$, and consider the MDP $(\S,\A,P,\mu)$ where 
\begin{align}
P(s'|s,g) = \EE{P_{SI}(s'|s, g(Z))},
\label{eq:MDPtoMDP-SI_prob}
\end{align}
and 
\begin{align}
\mu(s,g)=\EE{\mu_{SI}(s,g(Z),Z)},
\label{eq:MDPtoMDP-SI_cost}
\end{align}
where the expectations are taken over $F_z$. It is easy to see that any deterministic policy $\pi_{SI}:\S\times\Z \to \A_{SI}$ for the MDP-SI can be turned into a deterministic policy for the corresponding MDP using
\begin{equation}
\pi(s)=\pi_{SI}(s,\cdot)\in \A,
\label{eq:MDPtoMDP-SI_policy}
\end{equation}
and vice versa, and that the expected average cost of the two models are the same for the corresponding policies. Therefore, it is enough to consider the MDP $(\S,\A,P,\mu)$. If $\Z$ is finite, the new MDP is finite, and we can use standard results (see, e.g., \cite{Puterman:book}) to analyze the structure of the optimal policy:
Assume that the MDP is finite, $P^\pi$ is irreducible and aperiodic for any deterministic policy $\pi \in \Pi$, and let $S_1,A_1,S_2,A_2,\ldots$ denote the state-action sequence obtained by following policy $\pi$. Then $\rho^\pi$ in \eqref{eq:rho_pi}
exists, and the differential value function for any state $s \in \S $ is defined as
\begin{equation}
V^\pi(s) = \EEcond{\sum_{t=1}^\infty (\mu(S_t,A_t)-\rho^\pi)}{S_1=s}~.
\label{eq:diff_valuefunc}
\end{equation}
Furthermore, the optimal policy $\pi^*$ in \eqref{eq:arg_opt_policy} satisfies
\begin{equation}
\label{eq:opt_diff_valuefunc}
V^{\pi^*}(s) = \min_{a\in\A} \left\{\mu(s,a) - \rho^{\pi*} + \sum_{s' \in \S} P(s'|s,a) V^{\pi^*}(s') \right\},
\end{equation}
and $a=\pi^*(s)$ minimizes the right hand side. While these results make the analysis easy, unfortunately they do not directly apply to our case, where the state space $\S$ can be countably infinite and, due to the fact that $\Z$ is not finite, the action set $\A$ is infinite (and uncountable). Luckily, it is possible to extend the above results, specifically \eqref{eq:opt_diff_valuefunc} to  the MDP $(\S,\A,P,\mu)$ when $\Z$ is an interval (this will be done in the proof of Lemma~\ref{lem:piecewise_constant}).
Using the definition of the MDP in \eqref{eq:MDPtoMDP-SI_prob}--\eqref{eq:MDPtoMDP-SI_policy} and the expression for the optimal value function in \eqref{eq:opt_diff_valuefunc}, we can prove the following property of the optimal policy $\pi^*(s,\cdot)$ of an MDP-SI (the proof is given in Appendix~\ref{app:proof-piecewise_constant}). 
\begin{lemma}
\label{lem:piecewise_constant}
Consider the countable-state, finite-action MDP-SI problem $(\S,\A_{SI},P_{SI},\mu_{SI},\Z,F_Z)$. Suppose that $P^\pi$ is ergodic for any policy $\pi$ in the corresponding MDP $(\S,\A,P,\mu)$ and that $\Z$ is an interval.
Then \eqref{eq:opt_diff_valuefunc} holds for $(\S,\A,P,\mu)$. Furthermore, if $\mu_{SI}(s,a_{SI},z)$ is a linear function of $z$ 
for any $s\in\S, a_{SI}\in\A_{SI}$,
then the optimal policy $\pi^*(s,\cdot)$ is a piecewise constant function for any $s \in \S$.
\end{lemma}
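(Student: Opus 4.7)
The plan is to prove the two claims in sequence: first that the Bellman optimality equation \eqref{eq:opt_diff_valuefunc} carries over from the classical finite setting to the countable-state, uncountable-action MDP $(\S,\A,P,\mu)$ defined in \eqref{eq:MDPtoMDP-SI_prob}--\eqref{eq:MDPtoMDP-SI_cost}, and then to combine it with the linearity assumption on $\mu_{SI}$ to obtain the piecewise-constant structure of $\pi^*(s,\cdot)$.

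For the first step, the key observation is that the minimization over $g\in\A$ on the right-hand side of \eqref{eq:opt_diff_valuefunc} decomposes across $z$. More precisely, for any candidate differential value function $V$,
\begin{equation*}
\min_{g\in\A}\left\{\mu(s,g)+\sum_{s'\in\S}P(s'|s,g)V(s')\right\}
= \EE{\min_{a_{SI}\in\A_{SI}}\left\{\mu_{SI}(s,a_{SI},Z)+\sum_{s'\in\S}P_{SI}(s'|s,a_{SI})V(s')\right\}},
\end{equation*}
where the expectation is with respect to $F_Z$. The inequality $\geq$ follows by pointwise optimization; the reverse inequality follows because $\A_{SI}$ is countable and the inner expression is Borel-measurable in $z$, so a standard measurable selection argument (e.g., Kuratowski--Ryll-Nardzewski) yields a Borel-measurable $g^*\in\A$ that achieves the pointwise minimum almost surely. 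I would then invoke the standard average-cost optimality theory for countable-state MDPs with standard Borel action space (the ergodicity of $P^\pi$ for every $\pi$, together with the uniform bound $\mu_{SI}\leq \mu_{max}$, provide the usual sufficient conditions) to obtain the existence of a scalar $\rho^{\pi^*}$ and a bounded $V^{\pi^*}:\S\to\R$ satisfying \eqref{eq:opt_diff_valuefunc}, with a minimizing selector $\pi^*$.

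For the second step, fix $s\in\S$ and define
\begin{equation*}
Q^*(s,a_{SI},z) \triangleq \mu_{SI}(s,a_{SI},z)-\rho^{\pi^*}+\sum_{s'\in\S}P_{SI}(s'|s,a_{SI})V^{\pi^*}(s').
\end{equation*}
By the linearity assumption, $z\mapsto Q^*(s,a_{SI},z)$ is an affine function of $z$ for each $a_{SI}\in\A_{SI}$, with slope and intercept depending on $(s,a_{SI})$ but not on $z$. From the decomposition of the minimization above, one may select $\pi^*(s,z)\in\argmin_{a_{SI}\in\A_{SI}}Q^*(s,a_{SI},z)$. Since $\A_{SI}$ is finite (or countable with finitely many candidates relevant, cf.\ $\A_s\neq\emptyset$ and the finite effective action set at each $s$), the pointwise minimum is the lower envelope of finitely many affine functions in $z$, and the set of $z$ where two such affine functions intersect is at most a single point per pair. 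The intersection points thus partition the interval $\Z$ into finitely many subintervals on each of which a single index attains the minimum; breaking ties by, say, the smallest index of $a_{SI}$ produces a Borel-measurable, piecewise-constant selector $z\mapsto\pi^*(s,z)$, as claimed.

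The hard part will be the first step. The decomposition identity and the measurable selection are standard once set up correctly, but verifying the existence of a bounded solution to the average-cost optimality equation in a countably infinite-state MDP with uncountable action space requires carefully invoking (and checking the hypotheses of) a known result rather than re-deriving it from scratch; the ergodicity hypothesis on $P^\pi$ and the uniform cost bound $\mu_{max}$ will be the crucial ingredients there. Once \eqref{eq:opt_diff_valuefunc} is in hand, the piecewise-constant conclusion essentially reduces to the elementary geometric fact that the lower envelope of finitely many lines over an interval is piecewise linear with a piecewise-constant argmin.
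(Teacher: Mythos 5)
Your second step is sound and is essentially the paper's argument: once \eqref{eq:opt_diff_valuefunc} is available for $(\S,\A,P,\mu)$, the minimization over $g\in\A$ decomposes pointwise in $z$, and for each fixed $s$ the pointwise problem is a minimum of finitely many affine functions of $z$ (linearity of $\mu_{SI}$ in $z$, plus the fact that $P_{SI}$ and $V^{\pi^*}$ do not depend on $z$), whose argmin is constant on the finitely many subintervals determined by the pairwise intersection points; measurability of the resulting selector is immediate from its piecewise-constant form, so the Kuratowski--Ryll-Nardzewski machinery is not even needed here.

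The genuine gap is in your first step. You propose to invoke ``standard average-cost optimality theory for countable-state MDPs with standard Borel action space,'' asserting that ergodicity of $P^\pi$ and the uniform bound $\mu\le\mu_{max}$ ``provide the usual sufficient conditions.'' They do not: the available existence results for the average-cost optimality equation with countably infinite state space (e.g., the theorems the paper cites) require, in addition to a recurrence/ergodicity condition and bounded costs, that the action space be a \emph{compact} metric space (with suitable continuity of $\mu$ and $P$ in the action); for a genuinely non-compact Borel action space, ergodicity plus boundedness is not enough, and no off-the-shelf theorem at that generality exists. Here the action space $\A$ is the set of Borel-measurable maps $g:\Z\to\A_{SI}$, which is uncountable, and the nontrivial content of the paper's proof is precisely to manufacture the missing compactness: one must first quotient $\A$ by almost-sure equality under $F_Z$ (since distinct functions with $\PP{g(Z)=g'(Z)}=1$ represent the same policy, which obstructs any compactness argument on $\A$ itself), and then show that the resulting family of representatives is a compact metric space under the metric $\PP{g(Z)\neq g'(Z)}$ --- a fact that exploits the finiteness of $\A_{SI}$ and requires adapting a known compactness result, not merely citing one. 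Your proposal leaves exactly this verification open and, more importantly, misidentifies which hypotheses are the crucial ones to check, so the ``hard part'' you flag is not merely deferred bookkeeping but the substantive step of the proof.
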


Combining the MDP-SI formulation of Section~\ref{sec:MDPSI} with Lemma~\ref{lem:piecewise_constant}, we obtain that for any state $s\in\S$, the optimal decision is a piecewise constant function of the channel cost $C_t$ with values taken from $\A_s$. Also note that the technical condition in the lemma that $P^\pi$ is ergodic is easily satisfied in our cache management problem due to the user access model: The fact that the user access clears all contents from both inside and outside of the cache at least once in every $D_{\max}$ time slots or after a geometric waiting time ensures that any state visited with positive probability is positive recurrent. Thus, to achieve ergodicity, it is sufficient to guarantee that the process is aperiodic. This readily follows from the IRM model, and also holds for the bounded inter-access time model under mild assumptions (e.g., if the user can access the OSN in two consecutive time slots with positive probability).

\subsection{Structure of the optimal cache management policy}\label{sub:OptimalPolicy}

Here we will describe the structure of the optimal policy for the proactive caching problem. 

We start with the technical definition of partial ordering for multisets, which will be useful to characterize the effect of good actions: For two multisets $M_1$ and $M_2$ with nonnegative elements, we write $M_1 \le M_2$, if, either (i) they are of equal size and there is a one-to-one mapping between the elements of $M_1$ and $M_2$ such that the element from $M_1$ is never larger than the corresponding element from $M_2$; or (ii) if they are of different size, but the same relationship holds after adding zeros to the smaller set to equalize their sizes.

Now consider two states of the MDP describing the caching problem: $s=(\rel,\cache,E) \in \S$ and $s'=(\rel',\cache',E') \in \S$. We will say that $s$ \textit{is better than} $s'$, and write $s \succeq s'$, if $E=E'$,  the remaining lifetimes of all the contents are the same, that is, $\rel \cup \cache = \rel' \cup \cache'$, and $\rel \le \rel'$ and $\cache \ge \cache'$. Intuitively, $s \succeq s'$ means that the same contents are available for pre-caching in $s$ and $s'$, but in state $s$ ``better'' contents have already been downloaded to the cache (i.e., the contents in the cache remain relevant longer while the ones outside expire earlier). The next lemma formalizes this statement:

\begin{lemma}
\label{lem:Sbetter}
Assume the conditions of Lemma~\ref{lem:piecewise_constant} hold. Let $s, s' \in \S$ and suppose that $s \succeq s'$. Then, $V^{\pi^*}(s) \le V^{\pi^*}(s')$, that is, the future average download cost starting form $s$ is not larger than the cost starting from $s'$.
\end{lemma}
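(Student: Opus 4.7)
The plan is to prove monotonicity of the finite-horizon optimal value functions $V_n$ by induction on $n$ and then pass to the limit. Put $V_0\equiv 0$ and $V_{n+1}(s)=\min_{a\in\A}\bigl\{\mu(s,a)+\sum_{s''\in\S}P(s''|s,a)V_n(s'')\bigr\}$ for the MDP $(\S,\A,P,\mu)$ obtained from our MDP-SI via \eqref{eq:MDPtoMDP-SI_prob}--\eqref{eq:MDPtoMDP-SI_cost}. Under the ergodicity assumptions of Lemma~\ref{lem:piecewise_constant}, the standard value-iteration convergence result (see, e.g., \cite{Puterman:book}) gives $V_n(s)-n\rho^{\pi^*}\to V^{\pi^*}(s)$ as $n\to\infty$, so monotonicity w.r.t.\ $\succeq$ is preserved in the limit. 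It therefore suffices to show that $s\succeq s'\Rightarrow V_n(s)\le V_n(s')$ for every $n\ge 0$; the base case $n=0$ is immediate.

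\textbf{Inductive step via coupling.} Assume the claim for $V_n$ and fix $s=(\rel,\cache,E)\succeq s'=(\rel',\cache',E)$. Let $a^*$ attain the minimum in the recursion for $V_{n+1}(s')$, so for every channel realization $z$ it produces an admissible pair $(A^{(1)*}(z),A^{(2)*}(z))$ with post-action cache $T^*(z)=\cache'\cup A^{(1)*}(z)\setminus A^{(2)*}(z)$. I would define $T(z)$ as the $|T^*(z)|$ largest-lifetime elements of the multiset $\cache\cup A^{(1)*}(z)$, and then take $A^{(1)}(z)=T(z)\setminus\cache$ and $A^{(2)}(z)=\cache\setminus T(z)$, yielding a candidate action $a(z)=(A^{(1)}(z),A^{(2)}(z))$ for state $s$. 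Since $T(z)\subseteq\cache\cup A^{(1)*}(z)$, the new downloads $A^{(1)}(z)$ form a sub-multiset of $A^{(1)*}(z)$, so the per-channel cost $z\cdot|A^{(1)}(z)|\le z\cdot|A^{(1)*}(z)|$ and $\mu(s,a)\le\mu(s',a^*)$; moreover $|T(z)|=|T^*(z)|\le B$, so $a$ is admissible. Couple the two trajectories using common $U_{t+1}$ and $\New_{t+1}$: when $U_{t+1}=1$, the next states coincide by \eqref{eq:set_access} and the forced access cost in $s$ is no larger since $|\rel|\le|\rel'|$; when $U_{t+1}=0$, \eqref{eq:set_noaccess} together with the key inequality $T(z)\ge T^*(z)$ (see below) and $\cache\cup\rel=\cache'\cup\rel'$ yield $S_{t+1}\succeq S'_{t+1}$. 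The inductive hypothesis and the Bellman-min property then give $V_{n+1}(s)\le V_{n+1}(s')$.

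\textbf{Main obstacle.} The crux is verifying the dominance $T(z)\ge T^*(z)$ in the multiset order of Section~\ref{sub:OptimalPolicy}. The key observation is that $\cache\ge\cache'$ lifts to $\cache\cup A^{(1)*}(z)\ge\cache'\cup A^{(1)*}(z)$ since multiset dominance is preserved under union with a common multiset, so the top $|T^*(z)|$ elements of the left-hand side dominate the top $|T^*(z)|$ elements of the right-hand side; meanwhile $T^*(z)$ sits inside $\cache'\cup A^{(1)*}(z)$ with $|T^*(z)|$ elements, so it is dominated by the top $|T^*(z)|$ elements of its own superset. Chaining these two dominances yields $T(z)\ge T^*(z)$. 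The remaining verifications---that dominance is preserved under the $-1$ shift that drops zero-lifetime contents, and under adjoining the common $\New_{t+1}$, so that $\cache_{t+1}\ge\cache'_{t+1}$ and $\rel_{t+1}\le\rel'_{t+1}$ both follow from $T(z)\ge T^*(z)$ and $\cache\cup\rel=\cache'\cup\rel'$---reduce to the sorted-majorization characterization of the order $\ge$ and amount to routine multiset bookkeeping, completing the inductive step and, after passing to the limit, the proof of the lemma.
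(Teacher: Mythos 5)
Your overall strategy---couple the two trajectories and mimic the optimizer from $s'$ by a no-more-expensive action from $s$, then compare values---is the same idea the paper uses, though you route it through finite-horizon value iteration plus a limiting argument while the paper unrolls the average-cost Bellman equation \eqref{eq:opt_diff_valuefunc} directly along the coupled trajectories. The genuine gap is in your concrete mimicking action: it is infeasible in general. You set $T(z)$ to be the $|T^*(z)|$ largest elements of $\cache\cup A^{(1)*}(z)$ and download $A^{(1)}(z)=T(z)\setminus\cache$, but $A^{(1)*}(z)$ is a sub-multiset of $\rel'$, not of $\rel$; precisely when $s\succ s'$, the long-lifetime contents that $a^*$ downloads in $s'$ are typically already sitting in $\cache$, so they are not available in $\rel$ and cannot be downloaded from state $s$. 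Concretely, take $\cache=\{5\}$, $\rel=\{3\}$, $\cache'=\{3\}$, $\rel'=\{5\}$, and let $a^*$ download the lifetime-$5$ content, i.e., $A^{(1)*}(z)=\{5\}$, $A^{(2)*}(z)=\emptyset$. Then $\cache\cup A^{(1)*}(z)=\{5,5\}$ contains two copies of a content that exists only once in the system, $T(z)=\{5,5\}$ is not an attainable cache state, and $A^{(1)}(z)=\{5\}$ is not contained in $\rel=\{3\}$. So the inequality $\mu(s,a)\le\mu(s',a^*)$ is asserted for an action $a\notin\A_s$, and the inductive step breaks exactly at the point your argument needs it.

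The repair is essentially the paper's coupling rule: mimic each download of $a^*$ by downloading the content with the largest remaining lifetime in $\rel$ itself (and skip that download entirely when every element of $\cache$ already dominates it), and mimic each eviction by evicting the smallest element of $\cache$. This keeps the number of downloads, hence the per-slot cost, no larger, stays feasible, and the resulting post-action cache can be shown to dominate $T^*(z)$ using $\cache\ge\cache'$, $\rel\le\rel'$ and $\cache\cup\rel=\cache'\cup\rel'$; your two dominance observations (preservation under union with a common multiset, and domination of a sub-multiset by the top elements of its superset) then apply to the correct candidate set, and the ``$-1$ shift plus common $\New_{t+1}$'' bookkeeping goes through as you describe. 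A secondary point: the convergence $V_n(s)-n\rho^{\pi^*}\to V^{\pi^*}(s)$ is standard for finite ergodic MDPs, but here $\A$ is an uncountable family of Borel functions on $\Z$, so invoking it off the shelf requires the same compactness/equivalence-class machinery the paper develops in Appendix~\ref{app:proof-piecewise_constant}; the paper avoids the finite-horizon detour altogether by iterating \eqref{eq:opt_diff_valuefunc} (already supplied by Lemma~\ref{lem:piecewise_constant}) along the coupled realizations, which is the cleaner route once the Bellman equation is in hand.
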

\begin{proof}
It is easy to see that if any action $a'$ is performed in $s'$, it is always possible to find another action $\hat{a}$ in $s$ such that the cost of $\hat{a}$ is no more than that of $a'$, that is, $\mu(s',a') \ge \mu(s,\hat{a})$, and the resulting new states satisfy $\hat{s}_2 \succeq s'_2$, where $s_2'$ and $\hat{s}_2$ denote the next state for the chains starting from $s'$ and $s$, respectively, assuming the content generation process and the user access process are the same (e.g., if $a'$ downloads a content from outside the cache of $s'$, $\hat{a}$ should download the content with the largest remaining lifetime from outside the cache of $s$, unless all the contents in the cache of $s$ have larger lifetimes, in which case $\hat{a}$ should not do anything). Now consider three coupled realizations of the MDP: $\{(S'_t,A'_t)\}$ starts from $S'_1=s'$, and follows the optimal policy $\pi^*$; the second realization $\{(\hat{S}_t,\hat{A}_t)\}$ starts from $\hat{S}_1=s$, and selects $\hat{A}_t$ such that $\hat{S}_t \succeq S'_t$ and $\mu(\hat{S}_t,\hat{A}_t) \le \mu(S'_t,A'_t)$ for all $t$; finally, $\{(S_t,A_t)\}$ starts form $S_1=s$, and follows the optimal policy $\pi^*$. Then, using the optimality of $A_t$ and $\pi^*$, by \eqref{eq:opt_diff_valuefunc} (which holds by Lemma~\ref{lem:piecewise_constant}), we have
\begin{align*}
V^{\pi^*}(s) = V^{\pi^*}(S_1) &\le \mu(\hat{S}_1,\hat{A}_1) - \rho^{\pi^*} + \EE{V^{\pi^*}(\hat{S}_2)}\\
& \le \mu(\hat{S}_1,\hat{A}_1) - \rho^{\pi^*} + \EE{\mu(\hat{S}_2,\hat{A}_2) - \rho^{\pi^*}
+\EE{V^{\pi^*}(\hat{S}_3)}}  \\
& \quad \vdots \\
& \le \EEcond{\sum_{t=1}^\infty (\mu(\hat{S}_t,\hat{A}_t) - \rho^{\pi^*})}{\hat{S_1}=s}~.
\end{align*}
Furthermore, by the coupling of the realizations, 
\[
\EEcond{\sum_{t=1}^\infty (\mu(\hat{S}_t,\hat{A}_t) - \rho^{\pi^*})}{\hat{S_1}=s}
 \le
 \EEcond{\sum_{t=1}^\infty (\mu(S'_t,A'_t) - \rho^{\pi^*})}{S'_1=s'} = V^{\pi^*}(s')~.
\]
Putting everything together, we obtain $V^{\pi^*}(s) \le V^{\pi^*}(s')$.
\end{proof}

\bigskip

Next we express the actions in $\A_s$ more intuitively by defining a \emph{simple action}, which we also denote as $a$, to simplify the notation.

\begin{definition}[\textbf{Simple Action}]
For any $l \in \cache$ and $L \in \rel$ (recall that $l$ and $L$ denote the remaining lifetime of some contents), a simple action $a=(l|L)$ is defined as follows: If $E_t >0$, $a = (l|L)$ replaces a cache content with remaining lifetime $l$ with a relevant content outside the cache with remaining lifetime $L$, by removing the former content from the cache and downloading and caching the latter; i.e., it ``swaps'' the two contents. If $E_t = 0$, $a = (l|L)$ downloads the content with remaining lifetime $L$, and moves both contents to the app.
\end{definition}

In this definition, $l=0$ means that the content with lifetime $L$ is downloaded without any corresponding removal of a content from the cache. Similarly, $L=0$ means that no content is downloaded while a content with remaining lifetime $l$ is removed from the cache. Note that, with the optimal policy, the latter (i.e., $L=0$) can only happen if either $l=0$ (i.e., an expired content is removed from the cache), or when $E_t = 0$ and more contents are moved from the cache to the app than those downloaded from the OSN server to the app. At every time slot $t$, because of the cache capacity constraint, the CM can only take up to $B$ simple actions if $E_t>0$. Therefore, at such instances, any action of an optimal policy can be expressed as at most $B$ consecutive simple actions, and an action $A_t = (\{L_1,\ldots,L_{B'}\},\{l_1,\ldots,l_{B'}\})$,\footnote{If either $|\rel|$ or $|\cache|$ is less than $B'$, we simply zero-pad the set so that $\vert A^{(1)}_t\vert = \vert A^{(2)}_t\vert = B'$. } for some $0 \le b \le B$ can be written as a sequence of simple actions $\{(l_1|L_1)\cdots(l_{b}|L_{b})\}$. 

Now for a state $s=(\rel,\cache,E)$ with $E>0$, assume that $l_1\le \cdots \le l_B$ are the contents in $\cache$, and let $L_1 \ge \cdots \ge L_B$ denote the $B$ largest elements of $\rel$.
To find the optimal action, first we determine the best simple action. Let $s^*_1$ denote the next state if action $(l_1,L_1)$ is taken, and let $s'_1$ denote the state after a different simple action $(l'|L')$. Since $l_1$ is the smallest element of the cache and $L_1$ is the largest element outside the cache, assuming the same content generation, it is immediate that $s^*_1 \succeq s'_1$.
Then, by Lemma~\ref{lem:Sbetter}, $V^{\pi^*}(s^*_1) \ge V(\pi^*)(s'_1)$. Therefore, by \eqref{eq:opt_diff_valuefunc}, $(l_1,L_1)$  is the best simple action. Considering larger actions composed of $b$ simple actions for $b \ge 2$, it follows similarly that the optimal such action is $A^{b}=\{(l_1|L_1)\cdots(l_{b}|L_{b})\}$ (note that the energy cost associated with such an action is $b\, C$, where $C$ is the channel cost of a single download). To find the optimal action, it remains to compare the actions $A^{b}$ for different values of $b \in \{0,\ldots,B\}$. Denoting the next state following action $A^{b}$ by $s^{b}$, the relative action value of $A^{b}$ for channel cost $C$ is given by
\begin{equation*}
Q^{\pi*}(s,A^{b},C)= b\, C - \rho^{\pi^*} + \EE{V^{\pi^*}(s^{b})},
\end{equation*}
and, by \eqref{eq:opt_diff_valuefunc}, the optimal action is the one minimizing $Q^{\pi*}(s,A^{b},C)$ for a given $C$: that is, $A^{b^*}$ with $b^*=\argmin_{b} Q^{\pi*}(s,A^{b},C)$.
Notice that, as a function of $C$, $Q^{\pi*}(s,A^{b},C)$ is a linear function with slope $b$ and intersecting the $y$ axis at $\EE{V^{\pi^*}(s^{b})}$.
Since, obviously, $s^{b} \succeq s^{b'}$ for any $b > b'$, w have $V^{\pi^*}(s^{b}) \le V^{\pi^*}(s^{b'})$, and so $\EE{V^{\pi^*}(s^{b})}$ is non-increasing in $b$. Therefore, there exist thresholds $0=\thresh_{B+1} \leq \thresh_{B} \leq \cdots \leq \thresh_1 \leq C_{max}$, such that the optimal action is $A^{b}$ if the channel cost belongs to the interval $[\thresh_{b+1},\thresh_{b}]$.
Since $A^{b'} \subset A^{b}$ for $b > b'$, this also means that the simple action $a^b=(l_{b},L_{b})$ is performed whenever $C \le \thresh_b$ (note that $\thresh_b=0$ means that action $a^b$ is never performed because $C>0$).

This implies the following theorem.

\begin{theorem}
\label{th:opt_char}
Consider a state $s=(\rel,\cache,E) \in \S$ of the MDP-SI for the proactive caching problem, and let $C$ denote the channel cost. Let $l_1\le \cdots \le l_B$ denote the contents in $\cache$, and $L_1 \ge \cdots \ge L_B$ denote the $B$ largest elements of $\rel$. Then, there exist threshold values $0 \leq \thresh_{B} \leq \thresh_{B-1}\leq \cdots \leq \thresh_1 \leq C_{max}$, such that there is an optimal caching policy that performs the simple actions $ a^i = (l_i|L_i)$ for all $i$ such that $C\le \thresh_i$ if $E>0$ (i.e., the user does not access the OSN).
\end{theorem}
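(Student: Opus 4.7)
The plan is to derive the theorem as a direct structural consequence of Lemmas~\ref{lem:piecewise_constant} and~\ref{lem:Sbetter} applied to the MDP-SI cast of the caching problem in Section~\ref{sec:MDPSI}. First I would invoke Lemma~\ref{lem:piecewise_constant} (whose ergodicity hypothesis was verified in the discussion following it) to guarantee that the Bellman equation~\eqref{eq:opt_diff_valuefunc} holds and that the optimal action at $s$ is a piecewise constant function of the side information $C$. I would then partition the admissible actions $\A_s$ (assuming $E>0$) into classes indexed by $b\in\{0,1,\ldots,B\}$, where class $b$ contains all actions that perform exactly $b$ content swaps between $\cache$ and $\rel$, so that the instantaneous cost of every action in class $b$ equals $bC$.

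Second, I would identify a canonical best action within each class. Fix $b$ and take any two $b$-swap actions $A$ and $A'$; coupling the content-generation and user-access realizations, the greedy choice $A^b=\{(l_1|L_1),\ldots,(l_b|L_b)\}$, which evicts the $b$ shortest-lifetime cache contents in favor of the $b$ longest-lifetime outside contents, always produces a next state $\hat{s}_2$ satisfying $\hat{s}_2 \succeq s'_2$, where $s'_2$ is the next state under $A'$. Since the immediate costs of $A^b$ and $A'$ are identical, Lemma~\ref{lem:Sbetter} together with~\eqref{eq:opt_diff_valuefunc} implies that $A^b$ is an optimal representative of class $b$.

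Third, I would compare the $B+1$ canonical actions via the relative action value
\begin{equation*}
Q^{\pi^*}(s,A^b,C) = bC - \rho^{\pi^*} + \EE{V^{\pi^*}(s^b)},
\end{equation*}
which, as a function of $C$, is affine with slope $b$ and intercept $\EE{V^{\pi^*}(s^b)}-\rho^{\pi^*}$. Since $s^b \succeq s^{b'}$ whenever $b>b'$, Lemma~\ref{lem:Sbetter} gives that this intercept is non-increasing in $b$. The minimizer $b^*(C)=\argmin_b Q^{\pi^*}(s,A^b,C)$ is therefore determined by the lower envelope of a family of lines with strictly increasing slopes and non-increasing intercepts; a standard convex-envelope argument then yields non-increasing breakpoints $0=\thresh_{B+1}\le \thresh_B\le \cdots \le \thresh_1 \le C_{\max}$ with $b^*(C)=b$ on $[\thresh_{b+1},\thresh_b]$. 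The nesting $A^{b-1}\subset A^b$ translates this into the claimed per-simple-action statement: $a^i=(l_i|L_i)$ is performed precisely when $C \le \thresh_i$.

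I expect the main subtlety to be the coupling step used to invoke Lemma~\ref{lem:Sbetter} both within a class and across classes: one must verify that the ``better state'' partial order $\succeq$ is preserved along chains of MDP-SI transitions driven by the same exogenous randomness (content, lifetime, user-access, and channel-cost sequences). This is not deep, since $\succeq$ only compares states sharing the same underlying remaining-lifetime multiset, but it does require careful bookkeeping of which contents are inside versus outside the cache as time evolves, mirroring the construction already used in the proof of Lemma~\ref{lem:Sbetter}.
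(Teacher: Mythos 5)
Your proposal is correct and follows essentially the same route as the paper: Lemma~\ref{lem:piecewise_constant} supplies the Bellman equation \eqref{eq:opt_diff_valuefunc}, Lemma~\ref{lem:Sbetter} via the $\succeq$ order shows the greedy swap $A^{b}$ is best among $b$-swap actions and that $\EE{V^{\pi^*}(s^{b})}$ is non-increasing in $b$, and the lower envelope of the affine functions $Q^{\pi^*}(s,A^{b},C)$ gives the nested thresholds and the per-simple-action rule. The coupling bookkeeping you flag is exactly the mechanism already carried out in the paper's proof of Lemma~\ref{lem:Sbetter}, so no additional gap remains.
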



The thresholds for different simple actions depend on what other simple actions are available, and on the contents of the cache. This is because if we cache a content, its value depends on the likelihood of the content to be removed from the cache before being consumed by the user, and this likelihood is affected by the lifetime of the other contents in the cache.

Having shown that the optimal policy exhibits a threshold behavior, one still has to evaluate the optimal threshold values in order to characterize the optimal policy and the corresponding optimal performance. Therefore, the number of threshold values to be determined is in the order of the cardinality of the state space $\S$, which is extremely large. This makes it computationally infeasible to compute the optimal threshold values.  Interestingly, this is not the case if we have a sufficiently large cache capacity, e.g., $B \ge M_{\max} K_{\max}$, in which case we never remove a content from the cache unless it is consumed, or has expired; we will refer to this as the case of unlimited cache capacity. Therefore, we can decide about each content individually, and independently of the cache contents. This gives rise to the following corollary:
\begin{corollary}
Assume that the cache capacity is unlimited, that is, $B=\infty$. Then, for any state $s=(\rel,\cache,E)$ with $E>0$, there exist thresholds $0\le \thresh_{1,E} \le \cdots \le \thresh_{K_{\max},E} \le C_{max}$ (recall that $K_{\max}$ is the maximum lifetime), which depend only on $E$, such that a content with remaining lifetime $L \in \rel$ is downloaded if $C \le \thresh_{L,E}$.
\label{cor:LB-UC}
\end{corollary}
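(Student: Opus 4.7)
The plan is to leverage the decoupling provided by unlimited cache capacity and reduce the problem to independent single-content subproblems, after which the threshold structure and monotonicity follow from Lemma~\ref{lem:piecewise_constant} and Theorem~\ref{th:opt_char}. First I would observe that when $B=\infty$, an optimal policy never evicts a relevant (i.e., non-expired) content from the cache: removing such a content can only risk a subsequent re-download at a possibly larger channel cost, while offering no benefit since cache space is never constrained. Hence at any slot with $E>0$, the only real decision is which subset of contents in $\rel$ to pre-download.

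The crux is a per-content decomposition of the total cost. For each content $i$, let $c_i = C_{\tau_i}$ if $i$ is ever downloaded at some slot $\tau_i$ (either proactively, or at a user-access slot before expiration), and $c_i = 0$ otherwise. Because the instantaneous cost $|A^{(1)}_t|\cdot C_t$ is linear in the number of downloads, the total cost is exactly $\sum_i c_i$. The no-eviction property together with the exogeneity of $\{C_t\}$, $\{U_t\}$, $\{M_t\}$ and the lifetimes ensures that the action taken on content $i$ affects only $c_i$ and leaves $c_j$ unchanged for $j\neq i$. Therefore, minimizing the long-run average cost decomposes into minimizing $\E[c_i]$ separately for each content, and the optimal action on content $i$ depends only on its remaining lifetime $L$, the elapsed time $E$, and the current channel cost $C$.

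The resulting single-content MDP-SI has state $(L,E)$, binary actions (``download now'' vs.\ ``wait''), and a cost function linear in the side information $C$. Invoking Lemma~\ref{lem:piecewise_constant} (ergodicity is inherited from the full problem) yields a threshold rule: download iff $C \le \thresh_{L,E}$, where the threshold depends only on $(L,E)$. Monotonicity in $L$ follows from Theorem~\ref{th:opt_char} applied to the composite state $s=(\rel=\{L,L+1\},\cache=\emptyset,E)$; the theorem asserts $\thresh_1 \ge \thresh_2$ for the two simple actions targeting the longer- and shorter-lifetime contents, and by the decomposition these must coincide with $\thresh_{L+1,E}$ and $\thresh_{L,E}$, respectively. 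The main obstacle is formalizing the per-content decomposition rigorously, that is, showing that under an optimal policy each $c_i$ depends only on decisions made on content $i$ itself. This hinges entirely on the no-eviction property, the additivity of the download cost over simultaneously downloaded contents, and the fact that the driving stochastic processes are unaffected by the CM's actions; once decomposition is in place, the remainder of the argument is a direct invocation of the structural results already established.
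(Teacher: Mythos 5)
Your proposal is correct and follows essentially the same route as the paper: with $B=\infty$ no eviction is ever needed, so the download decision for each content decouples from the rest of the cache, and the threshold structure and its monotonicity in $L$ then follow from the structural results (Lemma~\ref{lem:piecewise_constant} and Theorem~\ref{th:opt_char}), exactly as in the discussion preceding Corollary~\ref{cor:LB-UC}. Your sketch of the per-content cost decomposition is in fact somewhat more explicit than the paper's informal statement that one ``can decide about each content individually, and independently of the cache contents.''
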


Since the decision to download any content is independent of the others, the problem can be modeled as a finite-horizon MDP-SI, where the horizon equals the remaining lifetime $L$ with maximum horizon $K_{\max}$. Thus, we can apply dynamic programming \cite{Bertsekas:book} to determine the optimal downloading thresholds recursively: Let $V_{L,E}$ denote the future download cost associated with a content with lifetime $L$ from a state with time $E$ elapsed since the past user access following an optimal policy. Since there is no need to proactively download a content with lifetime $1$, $\thresh_{1,E}=0$ for all $E>0$, and so $V_{1,E}=0$ for any $E>0$. Let $p_E=\PPcond{U_{t+1}=1}{E_t=E}
=\PP{D_1 \le E+1} - \PP{D_1 \le E}$ denote the probability of user access in the next time slot.\footnote{Recall that $D_n$ denotes the $n$th inter-access time and the $D_n$ are i.i.d.}
Assuming that optimal decisions will be made for lifetimes up to $L-1$ for all $E$, a decision with threshold $\thresh$ for lifetime $L>1$ and elapsed time $E$ has a future download cost 
\begin{align}
\label{eq:vle}
V_{L,E,\thresh} &= \PP{C \le \thresh} \EEcond{C}{C\le\thresh}
+ \PP{C>\thresh}\big(p_E \E[C] + (1-p_E)V_{L-1,E+1} \big)~.
\end{align}
Minimizing the above expression in $\thresh$ by setting its derivative to zero, we obtain that the optimal threshold is $\thresh_{L,E}=p_E \EE{C} + (1-p_E) V_{L-1,E+1}$, which is exactly the expected future download cost if the content is not downloaded in the current state. Noticing that $\thresh_{L,E}$ equals the last term in parentheses in \eqref{eq:vle}, we obtain the following result.
\begin{corollary}
\label{cor:LB-UC2}
Assume that the cache capacity is unlimited (i.e., $B_{\max}=\infty$). Then the optimal thresholds $\thresh_{L,E}$ can be computed recursively as follows: $\thresh_{1,E}=0$ for all $E>0$. For $L \ge 1$, given $\thresh_{L,E}$ for all $E$, the optimal thresholds for $L+1$ can be obtained for all $E$ as 
\[
\thresh_{L+1,E}=p_E \EE{C} + (1-p_E)\bigg(
\PP{C \le \thresh_{L,E+1}} \EEcond{C}{C\le\thresh_{L,E+1}}
+ \PP{C>\thresh_{L,E+1}} \thresh_{L,E+1}\bigg)~.
\]
\end{corollary}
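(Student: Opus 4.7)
My plan is a backward induction on $L$, leveraging Corollary~\ref{cor:LB-UC} to reduce the analysis to a per-content, finite-horizon decision problem, and then minimizing the one-stage expected cost in closed form. Since the cache is unlimited, Corollary~\ref{cor:LB-UC} already guarantees that a content with remaining lifetime $L$ and elapsed time $E$ is downloaded iff $C\le\thresh_{L,E}$, independently of every other content. Hence it suffices to determine $\thresh_{L,E}$ and the associated expected residual download cost $V_{L,E}$ for a single tagged content by induction on $L$.

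For the base case I take a content with $L=1$ and $E>0$. Since $E>0$ means $U_t=0$, and the content's remaining lifetime drops to $0$ before the next slot, the content cannot be reached at any future user access, so proactive downloading is strictly wasteful. This gives $\thresh_{1,E}=0$ and $V_{1,E}=0$ for all $E>0$, matching the corollary's initialization.

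For the inductive step I assume $\thresh_{L,E'}$ and $V_{L,E'}$ are known for all $E'>0$ and consider a content with remaining lifetime $L+1$ in a non-access slot with elapsed time $E$. Conditioning on the current channel cost $C$ and on the next user-access event gives, exactly as in \eqref{eq:vle},
\begin{equation*}
V_{L+1,E,\thresh}=\PP{C\le\thresh}\EEcond{C}{C\le\thresh}+\PP{C>\thresh}\bigl(p_E\E[C]+(1-p_E)V_{L,E+1}\bigr),
\end{equation*}
since the cost is $C$ if we download now, $\E[C]$ if the user accesses next slot (probability $p_E$), and $V_{L,E+1}$ otherwise. Differentiating in $\thresh$ and using that $C$ has a continuous density on $[0,C_{\max}]$ shows the unique minimizer is the indifference threshold $\thresh_{L+1,E}=p_E\E[C]+(1-p_E)V_{L,E+1}$, which is precisely the expected future cost incurred when we decline to download in the current slot.

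The final step is purely algebraic: substituting $\thresh=\thresh_{L+1,E}$ into the display above, the parenthesized term is exactly $\thresh_{L+1,E}$, which yields the closed form
\begin{equation*}
V_{L+1,E}=\PP{C\le\thresh_{L+1,E}}\EEcond{C}{C\le\thresh_{L+1,E}}+\PP{C>\thresh_{L+1,E}}\thresh_{L+1,E}.
\end{equation*}
Applying this identity one lifetime earlier at $(L,E+1)$ and plugging the resulting expression for $V_{L,E+1}$ into the formula for $\thresh_{L+1,E}$ reproduces the recursion claimed by the corollary. I do not foresee any real obstacle: the only delicate point is the per-content decoupling that makes the one-dimensional DP legitimate, and that is provided for free by Corollary~\ref{cor:LB-UC}; the rest is elementary calculus plus a substitution.
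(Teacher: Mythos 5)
Your proposal is correct and follows essentially the same route as the paper: per-content decoupling from the unlimited-cache structure, the single-content finite-horizon dynamic program with the cost expression of \eqref{eq:vle}, minimization over the threshold to obtain the indifference value $p_E\EE{C}+(1-p_E)V_{L,E+1}$, and back-substitution of this value to eliminate $V_{L,E+1}$ and produce the stated recursion. The only minor quibble is your claim of a \emph{unique} minimizer (which would require a strictly positive density, whereas only continuity of $F_C$ is assumed); the paper needs only that the indifference threshold is a minimizer, so this does not affect the argument.
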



In case of the IRM user access model, the same thresholds can be used in all states, and the expression for the thresholds simplifies to $\thresh_1=0$ and for $L \ge 1$,
\begin{equation}
\label{eq:LB-UC3}
\thresh_{L+1}=p_a \EE{C} + (1-p_a)\bigg(
\PP{C \le \thresh_{L}} \EEcond{C}{C\le\thresh_{L}}
+ \PP{C>\thresh_{L}} \thresh_{L}\bigg)~.
\end{equation}

The optimal performance with an infinite cache capacity will be studied as a lower bound on the optimal performance for a practical finite cache capacity system in Section \ref{sec:lower_bound}.


\section{Low-Complexity Caching Schemes via Policy Approximation}
\label{sec:approx_schemes}

In the previous section we determined the structure of the optimal caching policy. 
According to Theorem~\ref{th:opt_char}, the optimal policy has a threshold structure, and the threshold for each simple action depends in general on the remaining lifetimes of all the contents inside and outside the cache, as well as on the time elapsed since the last user access. This implies that the optimal policy may employ completely different threshold values for the same simple action at different system states, and the optimal policy belongs to the family of policies parametrized by these thresholds. The dimension of this policy set is $|\bar{\S}|$, where $\bar{\S} \subset\S$ denotes the set of system states where the user does not access the OSN. Moreover, we have approximately $K_{\max}^2/2$ potential simple actions, each of which can have a different threshold value at each state. However, for any reasonable (finite) cache size $B$, this is huge; and hence, it is infeasible to compute an optimal policy (e.g., if $M_{\max} \ge B$, then just the cache content $\cache$ can take $\binom{B+K_{\max}}{K_{\max}}$ different values, which is already prohibitively large for even moderate values of $B$ or $K_{\max}$).
To resolve this problem, we use policy approximation techniques, and approximate the policy space using some simple parametrized form.



From now on we adopt the IRM user access model, which alleviates the need to consider the time $E_t$ elapsed since the last user access, readily reducing the state space. In the rest of this section, we introduce two low-dimensional approximations to the policy space, which allow us to run optimization algorithms (policy search algorithms, described in Section~\ref{sec:pgm})  to find a policy with good performance, and hence, give rise to computationally feasible caching schemes. These schemes are not based on a priori known statistics of the system and optimize the policy parameters based on observations (these observations can be collected either from the real system or via simulations through a generative model). Therefore, in principle, the methods can be used in a learning context, where an agent, who does not known the statistics of the environment a priori, can learn from its actions to update its policy in order to adapt to the unknown environment in a reinforcement learning fashion.

\subsection{The longest lifetime in--shortest lifetime out (LISO) policy}
\label{subsec:liso}
The longest lifetime in--shortest lifetime out (LISO) policy is a suboptimal threshold-based proactive caching policy with a simplified structure, such that it has a single threshold value for each simple action (corresponding to the content pair consisting of the content with the shortest remaining lifetime inside the cache and the one with the longest remaining lifetime outside the cache), independent of the system state. For every such pair, if the channel cost is below the threshold value, the two contents are ``swapped,'' and no action is taken otherwise. In this case, the policy is directly parametrized by the threshold values. That is,
\begin{equation*}
\thresh(l|L) = \theta(l,L),
\end{equation*}
where $\theta(l,L) \in [0,C_{\max}]$, for all pairs $a = (l|L)$, $l,L \in \{0, \ldots, K_{\max}\}$. Thus, the set of policies parametrized this way is of dimension $(K_{\max} + 1)^2$, which is feasible. We can further reduce the dimension by explicitly forbidding all simple actions $(l|L)$ with $l \ge L$ (i.e., setting the corresponding $\theta(l,L)$ to zero), since an optimal policy will not replace a cached content with a content that has a shorter remaining lifetime. Hence, for such simple actions, we have $\thresh(l|L) = 0$. We also note that the optimal policy has a monotonic structure; that is, $\thresh(l|L_1)\le \thresh(l|L_2)$ if $L_1 < L_2$ and $\thresh(l_1|L)\ge \thresh(l_2|L)$ if $l_1 < l_2$. These observations can be used to limit the search space for the threshold vector, speeding up the policy search methods.

\subsection{Linear function approximation (LFA) policy}
\label{subsec:lfa}
Next, we propose an improved policy representation (an extension of LISO), which takes into account the remaining lifetimes of the contents in the cache memory when determining the threshold values; this information can be useful in estimating the likelihood that a downloaded content will be removed from the cache before it expires or is consumed. However, to keep the computational complexity feasible, we employ linear function approximation (LFA) \cite{Sutton:1998:book}. To characterize the state of the cache, we define \emph{features} of the cache-state based on the number of contents in the cache with a particular remaining lifetime (this is meaningful thanks to the homogeneity of the size of the contents). More precisely, we define the cache-state features by a \emph{frequency vector} $\Phi_t = \left[\phi_t(0), \phi_t(1), \ldots, \phi_t(K_{\max})\right]$, where $\phi_t(i)$ is the ratio of the number of contents with lifetime $i$ in the cache at time $t$, that is,
\begin{equation}
\phi_t(i) \triangleq \frac{\sum_{l \in \cache} \ind{l = i}}{B}, \quad\mbox{for}\quad i = 0,1,\cdots,K_{\max},
\label{freq_component}
\end{equation}
where $l = 0$ denotes the empty locations as before. Clearly, $0 \le \phi(i) \le 1$, and $\sum_{i=0}^{K_{\max}}\phi(i) = 1$.

The threshold value for each simple action $a(l|L)$ for $l<L$, $l,L \in \{0, \ldots, K_{\max}\}$, is then defined as a linear function of the frequency vector $\Phi$ as
\begin{equation}
\label{LFA}
\thresh(l|L) = \sum_{i=0}^{K_{\max}}\phi(i) \theta_i(l,L) = \Phi^\top \mathbf{\theta}(l,L),
\end{equation}
where $\theta_i(l,L) \in \R$ for $l < L$, and $\theta_i(l,L)=0$ otherwise. The resulting scheme defines a $K_{\max}(K_{\max}+1)^2/2$-dimensional policy space.

\begin{remark}
We note that the LISO policy described in Section~\ref{subsec:liso}, which is directly parametrized by the threshold values for each simple action ignoring other contents in the cache, is a special case of the LFA policy with parameters $\theta_i(l,L) = \theta(l,L)$ for all $i$.
\end{remark}
In the next section, we describe two policy search algorithms that we use to optimize the parameters of the proposed approximate caching schemes.

\section{Policy Search Methods}
\label{sec:pgm}
Optimizing parametric policies for MDPs has been extensively studied in reinforcement learning \cite{Sutton:1998:book}. In this paper, we are going to employ policy gradient (PG) methods to optimize the parameters of our LISO and LFA policies. This section, based on \cite{Policysearch}, introduces these algorithms. PG methods are model-free reinforcement learning algorithms to find an optimal policy in an MDP by running gradient descent over the policy space to minimize the expected average cost $\rho^{\pi_{\para}}$, where $\pi_{\para}$ denotes the policy defined by the parameter vector $\para$. 
That is, in every step of the policy gradient algorithm, the actual parameter $\para_j$ is updated using the gradient $ \nabla_{\para} \rho^{\pi_{\para}}$ of $\rho^{\pi_{\para}}$ according to $\para$ as
\begin{align}
\para_{j+1} = \para_j - \lambda \nabla_{\para} \rho^{\pi_{\para_j}},
\label{eq:polupdate}
\end{align}
for some positive step size $\lambda$.

Since the gradient $\nabla_{\para} \rho^{\pi_{\para_j}}$ is not known in closed form in most cases, the gradient (and the average cost of the policy) has to be estimated  through sample averages over independent, finite trajectories of the system obtained via Monte Carlo rollouts. This, in turn, implies that, instead of \eqref{eq:polupdate} we will use a random estimate of the gradient; so, in practice, \eqref{eq:polupdate} will become a stochastic gradient descent algorithm. To curtail the effect of noise that is introduced due to the randomness, we obtain $\para_{j+1}$ as the average of $m$ policy updates, i.e., $\para_{j+1} = \frac{1}{m}\sum_{i=1}^m\para_{j+1,i}$, where each $\para_{j+1,i}$ is obtained using \eqref{eq:polupdate} with an independent estimate of the gradient.
The estimation procedure usually requires two steps:
\begin{enumerate}
\item \emph{Policy evaluation}: The average cost of a sample trajectory $\tau_{\para} =(S_1,C_1,A_1), \ldots, (S_T, C_T, A_T)$, obtained by following a given policy $\pi_\para$ with parameter vector $\para$, is evaluated as
\begin{equation}
J(\tau_\para) = \frac{1}{T}\sum_{t=1}^T \mu(S_t,A_t,C_t).
\label{eq:av_samplecost}
\end{equation} 
\item \emph{Policy exploration}: New sample trajectories are generated. Exploration is implemented either directly on the actions $A_t$, or on the parameter vector $\para$, by introducing an exploration \emph{noise} either at every time step of the trajectory, or at the beginning of the trajectory. 
\end{enumerate}

In what follows, we review two practical policy gradient algorithms that employ different estimation techniques.

\subsection{Finite difference method (FDM)}
In FDM, we estimate the gradient by generating sample trajectories following the policy $\pi_{\para_j}$ given by the parameter vector $\para_j$ (determining the threshold values $\thresh(l|L)$) and by new policies obtained by applying small perturbations $\Delta{\para}^{[i]}$ to the parameter vector $\para_j$. Generating trajectory $\tau^{[i]}$ for $\para_j$ and $\tau^{[i]}_\Delta$ for $\para_j + \Delta\para^{[i]}$, the change in the cost is estimated by
\begin{equation*}
\Delta J^{[i]} = J(\tau^{[i]}_\Delta) - J(\tau^{[i]}),
\end{equation*}
which is also approximately equal to $(\nabla_{\para} {\rho^{\pi_{\para}}})^\top \Delta\para^{[i]}$.
Thus, generating $N$ independent trajectories $\tau^{[i]}$, for $i=1,\ldots,N$, the gradient can be estimated from $\Delta\boldsymbol J_{\pi_\para}= [\Delta J^{[1]},\cdots, \Delta J^{[N]}]^\top$ and $\Delta\boldsymbol\Theta = [\Delta{\para}^{[1]},\cdots, \Delta{\para}^{[N]}]^\top$ by linear regression as follows:
\begin{align}
\nabla_{\para} \rho^{\pi_{\para}} \approx \left(\Delta\boldsymbol\Theta^\top \Delta\boldsymbol\Theta\right)^{-1} \Delta\boldsymbol\Theta^\top \Delta\boldsymbol J_{\pi_\para}~.
\end{align}

In the FDM method, policy exploration is implemented on the parameter vector at the beginning of each trajectory. Perturbations can be chosen randomly; in this paper the perturbations for each coordinate of $\para_j$ are drawn from a uniform distribution within the range $[-r,r]$, for some relatively small positive real number $r$.



\subsection{Likelihood-ratio method (LRM)}
In this section we describe another PG strategy, called  LRM. To simplify the treatment, throughout we use the MDP notation and include the cost $C$ in the state $S$. 

In LRM, exploration is implemented directly on the actions, and in every time step of each trajectory by using a randomized policy $\pi_{\para}(A\vert S) \in[0,1]$, which takes action $A$ in state $S$ with probability $\pi(A|S)$. Since $A$ may consist of several simple actions, for each simple action $(l|L)$ where $l<L$, $l,L \in \{0, 1, \ldots, K_{\max}\}$, we define a randomized policy $\pi_{\para}((l|L)\vert S)$ as a sigmoid function with negative slope parameter:
\begin{align*}
\pi_{\para}((l|L)\vert S) = \frac{1}{1+e^{-\eta(\thresh(l\vert L)-C)}}~,
\end{align*}
where $\eta > 0$ is the slope factor. Given cache contents $l_1\le \cdots \le l_B$, and the $B$ contents outside the cache with the largest remaining lifetimes $L_1 \ge \ldots \ge L_B$, we repeatedly try to perform the action $a^i \triangleq (l_i|L_i)$ with probability $\pi_{\para}(a^i|S)$ for $i=1,\ldots, B$, until the first failure. This implies that for $B' \le B$, the probability of performing action $A_{B'}=\{a^1,\ldots,a^{B'}\}$ is 
\begin{align*}
\pi_{\para}(A^{B'} \vert S) = \big(1-\pi_\para(a^{B'+1}|S)\big) \prod_{i=1}^{B'}\pi_{\para}(a^i\vert S),
\end{align*}
where $\pi_\para(a^{B+1}|S)$ is defined to be zero for all states $S$.

Let $P_{\para}$ denote the density of an infinite trajectory $\tau=(S_1,A_1),(S_2,A_2),\ldots$ obtained by following policy $\pi_\para$, and let $J(\tau)=\limsup_{T \to \infty}
\frac{1}{T} \sum_{t=1}^T \mu(S_t,A_t)$. Then, under general, non-restrictive assumptions, we have
\begin{align*}
\nabla_{\para} \rho^{\pi_{\para}} = \int \nabla_{\para} P_{\para}(\tau)J_{\pi_\para}(\tau)d\tau~.
\end{align*}
Using the ``likelihood-ratio'' identity $\nabla_{\para}\log P_{\para}(\tau) = \nabla_{\para} P_{\para}(\tau)/P_{\para}(\tau)$, the above gradient can be expressed as
\begin{align}
\label{eq:LRMgrad}
\nabla_{\para} \rho^{\pi_{\para}} &=\int P_\para(\tau)\nabla_{\para}\log P_\para(\tau)J(\tau)d\tau
= \E[\nabla_{\para}\log P_\para(\tau)J(\tau)]~.
\end{align}

The expectation with respect to the trajectory distribution $P_{\para}$ is approximated by sample averages over sampled trajectories $\tau^{[i]}$ of finite length. Interestingly, as it is well known in the reinforcement learning literature \cite{Policysearch}, this can be done without the knowledge of the density $P_\para$. Indeed, since $P_\para(\tau) = P(S_1)\prod_{t=1}^TP(S_{t+1}\vert S_t,A_t)\pi_{\para}(A_t\vert S_t)$, taking logarithm and differentiating according to $\para$ gives
\begin{align}
\nabla_{\para}\log P_{\para}(\tau) = \sum_{t=1}^T \nabla_{\para}\log \pi_{\para}(A_t\vert S_t),
\label{eq:log_distribution}
\end{align}
which can be computed directly from $\tau$ using the parametric form of $\pi_\para$. Thus, the expectation in \eqref{eq:LRMgrad} can be estimated by averaging over a number of independent trajectories sampled following policy $\pi_\para$.
To minimize the variance of the estimate, we introduce a \emph{baseline} vector $\mathbf{b}$, as in the REINFORCE algorithm \cite{REINFORCE90}, and estimate the $h$th coordinate of the gradient by
\begin{align*}
\nabla_{\para_h}\rho^{\pi_{\para}} = \E \left[\sum_{t=1}^T \nabla_{\para_h}\log \pi_{\para}(A_t\vert S_t)(J(\tau) - b_h)\right].
\end{align*}
The baseline does not introduce any bias in the gradient estimate: using \eqref{eq:log_distribution}, the likelihood-ratio identity, and the fact that $\int \nabla_{\para_h}P_{\para}(\tau)d\tau=0$ since $\int P_{\para}(\tau)d\tau = 1$, we get
\begin{align*}
\E \left[\sum_{t=1}^T \nabla_{\para_h}\log \pi_{\para}(A_t\vert S_t)b_h\right] = b_h\int \nabla_{\para_h}P_{\para}(\tau)d\tau = b_h\nabla_{\para_h}\int P_{\para}(\tau)d\tau = 0.
\end{align*}
As in the REINFORCE algorithm, we select the baseline $b_h$ for $\nabla_{\para_h}\rho^{\pi_{\para}}$ by minimizing the variance of the estimate of the $h$th coordinate, which yields
\begin{align*}
b_h = \frac{\E \left[\left(\sum_{t=1}^T \nabla_{\theta_h}\log \pi_{\para}(A_t\vert S_t,C_t)\right)^2 J(\tau)\right]}{\E \left[\left(\sum_{t=1}^T \nabla_{\theta_h}\log \pi_{\para}(A_t\vert S_t,C_t)\right)^2\right]}~.
\end{align*}

Numerical results obtained with the proposed low-complexity proactive caching algorithms optimized with both of the PG methods will be presented in Section \ref{sec:numerical}. Next, we present lower bounds on the performance to evaluate the performance loss introduced by the proposed low-complexity caching policies.



\section{Lower Bounds}\label{sec:lower_bound}

We derive two lower bounds on the average cost, which become tight under different settings of the problem.  
The lower bounds are based on some relaxation of the system constraints (the first one assume infinite cache capacity while the second one a non-causal knowledge of the user access times). For both cases, in the next subsections, we derive the optimal caching policy, and the lower bounds can be computed by estimating the performance of the optimal policy (in the corresponding relaxed system) using Monte Carlo rollouts; that is, averaging $J(\tau_\para)$ in \eqref{eq:av_samplecost} for several independent trajectories $\tau_\para$ obtained by following the optimal policy $\pi_\para$.

\subsection{Lower bound with unlimited cache capacity (LB-UC)}
\label{subsec:LBUC}
The first lower bound, called LB-UC, is obtained by considering an unlimited cache capacity (i.e., $B=\infty$). In this case, as explained in detail in Section \ref{sub:OptimalPolicy}, there is no need to remove and replace any content inside the cache. The decision to download and store a content can be taken individually, and independently of the existing contents in the cache. 
The structure of the optimal threshold values for LB-UC follows from Corollary \ref{cor:LB-UC2} and \eqref{eq:LB-UC3}. 

\if0
\begin{algorithm}[t]
\begin{algorithmic}[1]
\STATE \textbf{input:} Threshold values $\thresh_{t'}$ for $t' = 1,\ldots,D_{max}$
\FOR{$n=1$ to $N$}
\FOR{$t=1$ to $T$}
\STATE Initialise $\rel_1 = \new_1, \cache_1 = \emptyset$
\STATE Obtain $t'$ 
\IF{$t' \geq 1$ and $|\cache_t|<B$}
\STATE Download up to $B-|\cache_t|$ contents with remaining lifetime $L \geq t'; L\in \rel_t$, if $C_t \leq \thresh_{t'}$, into the cache: $A^{(1)}_t \leq B-|\cache_t|$
\ELSIF{$t' = 0$}
\STATE Download all $L$ and empty the cache: $A^{(1)}_t = |\rel_t|$.
\ELSE
\STATE $A^{(1)}_t = \emptyset$
\ENDIF
\STATE Compute $\mu_{t,n} = |A^{(1)}_t|\cdot C_t$
\STATE Update $\rel_{t+1} = (\rel_t\setminus A^{(1)}_t) \cup \new_{t+1}$ and $\cache_{t+1} = \cache_t\cup A^{(1)}_t$
\ENDFOR
\ENDFOR
\STATE \textbf{output:} Average energy cost $\rho = \frac{1}{N}\frac{1}{T}\sum_n^N\sum_t^T \mu_{t,n}$
\end{algorithmic}
\caption{LB-NCK}
\label{alg:LB-NCK}
\end{algorithm}
\fi

\subsection{Lower bound with non-causal knowledge of the user access times (LB-NCK)}
\label{subsec:LBNCK}
The second lower bound, called LB-NCK, is obtained by assuming non-causal knowledge of the user access times. Since the user access times are known in advance, contents that will expire before the user accesses the OSN will never be downloaded and can be automatically removed from the system. Therefore, there is no need to remove any content from the cache, thus $A_t^{(2)} = \emptyset$ for all $t$. All the remaining contents must be downloaded by the next user access, which means that there is no difference among contents, and so whenever the CM decides to download a content, it does not matter which one it is. As such, without loss of generality, we assume that the CM may pre-cache only the first $B$ contents that become available after a user access. This means that there will always be space in the cache for these contents to be downloaded. Then it follows that the optimal policy only needs to decide when to download these contents, and this decision is independent of the contents in $\rel$ and $\cache$, and may only depend on the time till the next user access and the current channel cost.

To determine the optimal policy, similarly to Corollary~\ref{cor:LB-UC2} and \eqref{eq:LB-UC3}, the problem can be modeled as a finite-horizon MDP-SI, where the time horizon is the time $G$ until the next user access. Denoting by $V^{NCK}_G$ the average energy cost of downloading a content following the optimal policy; we have $V^{NCK}_0=\EE{C}$, since the content must be downloaded when $G=0$. For any $G \ge 1$, the dynamic programming equations imply that $V^{NCK}_G=\EE{\min\{C,V^{NCK}_{G-1}\}}$. Therefore, the optimal decision is to download a content if the channel cost $C$ is smaller than the future download cost $V^{NCK}_{G-1}$. Thus, the optimal policy again has a threshold structure.
\begin{corollary}
Assuming that the user access times are known non-causally, there exist thresholds $C_{max} \ge \thresh^{NCK}_{1} \ge \cdots \ge \thresh^{NCK}_{D_{max}} \ge 0$\footnote{Note that $D_{max}$ is the bound on the length of the user access interval, and can be infinite under the IRM model.} such that, 
for any $\rel,\cache$, if there are $G$ time slots left until the next user access,  
a content with remaining lifetime $L\in\rel$ (with $L \ge G$) is downloaded for $G \ge 0$ (i.e., when $U_t=0$) if $|\cache| < B$ and $C_t \leq \thresh_{G}$.
The thresholds are given by the recursion $\thresh^{NCK}_1=\EE{C}$ and for $G \ge 2$,
$\thresh^{NCK}_G=\EE{\min\{C,\thresh^{NCK}_{G-1}\}}$.
\label{cor:LB-NCK}
\end{corollary}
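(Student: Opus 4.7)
The plan is to follow the outline that immediately precedes the corollary and formalize it as a finite-horizon MDP-SI. The first step is to argue, by three sequential relaxations, that the non-causal problem decomposes across contents. (i) With the user-access times known in advance, any content whose remaining lifetime is shorter than the time to the next access is irrelevant and can be discarded. (ii) Because every remaining content must be delivered to the app at the next user access anyway, no content ever needs to be evicted, so one may set $A_t^{(2)} = \emptyset$ throughout. (iii) Since all contents have identical size and per-download cost, they are exchangeable; hence it is without loss of generality to consider policies that pre-cache only the first $B$ contents that become available after a user access, which ensures the cache-capacity constraint is automatically met. Under these reductions, the download decisions for distinct contents become independent, and each content's optimization collapses to a single-item problem whose only state is the number of slots $G$ remaining until the next access.

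Second, I will model the single-item problem as a finite-horizon MDP-SI with horizon $G$, binary action $\{\text{download}, \text{wait}\}$, and i.i.d.\ side information $C_t$. Let $V^{NCK}_G$ denote the expected download cost under the optimal policy when $G$ slots remain. The boundary condition is $V^{NCK}_0 = \EE{C}$, because at $G = 0$ the content must be downloaded at the random cost $C$. For $G \ge 1$, since postponing sends the state to $G-1$ with expected future cost $V^{NCK}_{G-1}$ (independent of the current $C$, thanks to the i.i.d.\ channel assumption), the Bellman equation reads
\[
V^{NCK}_G \;=\; \EE{\min\bigl\{C,\, V^{NCK}_{G-1}\bigr\}},
\]
and the minimum on the right-hand side is achieved pointwise by downloading whenever $C \le V^{NCK}_{G-1}$ and waiting otherwise. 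This immediately yields a threshold rule with threshold $\thresh^{NCK}_G = V^{NCK}_{G-1}$.

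Third, substituting $V^{NCK}_{G-1} = \thresh^{NCK}_G$ back into the Bellman recursion produces $\thresh^{NCK}_1 = \EE{C}$ and $\thresh^{NCK}_G = \EE{\min\{C, \thresh^{NCK}_{G-1}\}}$ for $G \ge 2$, which is exactly the claim. The monotonicity $C_{\max} \ge \thresh^{NCK}_1 \ge \thresh^{NCK}_2 \ge \cdots \ge 0$ then follows by induction from the pointwise inequality $\min\{C,x\} \le x$ (taking expectations), combined with the base case $\thresh^{NCK}_1 = \EE{C} \in [0, C_{\max}]$.

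I do not anticipate a serious analytic obstacle: once the three relaxations in (i)--(iii) are in place, every remaining computation is a standard one-step backward induction. The one step that deserves a careful write-up is the exchangeability argument in (iii): one has to note that, under the per-slot cost $\mu_t = |A^{(1)}_t|\cdot C_t$, the identity of which particular contents are pre-cached does not influence the total cost, only the schedule of downloads does, so restricting attention to pre-caching the earliest-arriving $B$ contents after each access loses no optimality while keeping the cache-capacity constraint inactive.
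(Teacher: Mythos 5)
Your proposal is correct and follows essentially the same route as the paper: the same three relaxations (discarding contents that expire before the next known access, setting $A_t^{(2)}=\emptyset$, and exploiting exchangeability to restrict pre-caching to the first $B$ contents after an access), followed by the same finite-horizon MDP-SI backward induction giving $V^{NCK}_0=\EE{C}$, $V^{NCK}_G=\EE{\min\{C,V^{NCK}_{G-1}\}}$, and the threshold $\thresh^{NCK}_G=V^{NCK}_{G-1}$. Your explicit induction for the monotonicity $C_{\max}\ge\thresh^{NCK}_1\ge\cdots\ge 0$ is a small addition the paper leaves implicit, but it is the obvious consequence of the same recursion.
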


\section{Numerical Results} 
\label{sec:numerical}
Here we present numerical simulations implementing the proposed caching schemes with both FDM and LRM. We compare their performances with the two lower bounds in Section \ref{sec:lower_bound}, as well as with reactive and random caching schemes. In \textit{reactive caching}, all the relevant contents are downloaded at the time of user access. This scheme does not utilize the storage available at the UE. In \textit{random caching}, when $U_t=0$, each relevant content in $\rel$ is downloaded randomly, with a constant probability $p_r > 0$ whenever $|\cache_t| < B$, and $p_r = 0$ whenever $|\cache_t| = B$. This scheme does not utilize any intelligence in making caching decisions. 
Note that random caching is equivalent to reactive caching when $p_r = 0$. While random caching exploits the cache memory, we will see that this does not improve the performance since it is done in a random manner.


\subsection{System Setup}
\label{subsec:sim_setup}
The number of contents generated at each time slot, $M_t$, is drawn uniformly at random from the set $\{1,\ldots,M_{max}\}$, while the lifetime $K_{t,i}$ of individual contents $i \in \{1,\ldots,M_t\}$ at the time of generation is drawn from the set $\{5,10,\ldots,K_{\max}\}$, where $K_{\max}$ is a multiple of $5$. We assume that the user accesses the system independently at each time slot, with probability $p_a=0.25$.

We obtain $C_t$ using Shannon's capacity formula, $R = W \log_2\left(1 + P_{signal}/P_{noise}\right)$, where $R$ is a deterministic transmission rate, $W$ is the channel bandwidth, $P_{noise}$ is the noise power and $P_{signal}$ is the signal power. Using system parameters consistent with the Long Term Evolution (LTE) network model \cite{Sesia:LTE:2011}, on a dB scale, the noise power is given by
\begin{equation}
P_{noise} = 10\log_{10}(kT) + 10\log_{10}W + NF,
\label{eq:noise_pow}
\end{equation}
where $kT = -174$ dBm/Hz is the noise power spectral density, and $NF = 5$ dB is a typical noise figure. The signal power is given by
\begin{equation}
P_{signal} =  C_t + G_{TX} + G_{RX} - PL(d),
\label{eq:signal_pow}
\end{equation}
where $G_{TX}$ and $G_{RX}$ are the transmit and receive antenna gains, respectively, and $PL(d)$ is the pathloss, which is a function of the distance $d$ between the user and the serving BS. We adopt the 3GPP channel model \cite{3GPP_release9}, and consider an urban micro (UMi) system, with an hexagonal cell layout in the non-line-of-sight (NLOS) scenario, in which case the pathloss is given as 
\begin{equation}
PL(d) = 36.7\log_{10}(d) + 22.7 + 26\log{10}(f_c) + \mathcal{X}_\sigma,
\label{eq:pathloss}
\end{equation}
where $f_c = 2.5GHz$ is the center frequency, and $\mathcal{X}_\sigma$ is the shadow fading parameter drawn from a zero-mean log-normal distribution with standard deviation $\sigma = 4$ dB. The distance $d$  is in meters ($\mathrm{m}$), and we assume that the user location, described by $d$, is uniformly distributed in time and across BSs.

We assume that a micro BS has a radius of $250\mathrm{m}$, and the shortest possible distance of a user from a serving BS is $50\mathrm{m}$. Therefore, the user distance $d$ from the serving BS in any time slot is drawn from a uniform distribution $d \sim \mathcal{U}(50,250)$. We assume that the user is only served by a single BS in every time slot. Although we focus on a single user, the savings in energy will scale proportionally with the number of users. We use the values $G_{TX} = 17$ dBi and $G_{RX} = 0$ dBi. To compute the noise power, we assume a fixed (average) bandwidth of $10$ MHz in every time slot, and for the Shannon capacity formula, we assume a spectral efficiency of $R/ W = 2~\mathrm{bps/Hz}$ for each content item. The required power will be linearly scaled with the number of contents downloaded at each time slot, assuming they are independently encoded and transmitted over orthogonal subbands. For all the simulations, we set the initial state as $\rel_0 = \cache_0 = \emptyset$ and $E_0 = 0$. The cache capacity $B$ is measured in number of contents.


For the FDM algorithm, we select the perturbation parameters $\Delta\theta$ from a uniform distribution $\Delta \theta \sim \mathcal{U}(-0.08, 0.08)$. For each iteration, a policy update is performed after $100$ trajectories, with the duration of a trajectory set as $300$ time slots. On the other hand, for the LRM algorithm, for the randomized policy to closely resemble the actual deterministic policy, the logistic function defining the policy should be as close to a unit step function as possible. Hence, we choose the slope $\eta = 10$. A policy update is performed after only $20$ trajectories, with the duration of a trajectory set as $300$ time slots.

For the initial parameter vector $\para_0$ of the LISO policy, we use the threshold values obtained as the solution of the unlimited cache capacity problem as the initial $\theta(0,L)$ values for all $L$. For LFA, we use the components of the same parameter vector (obtained from the unlimited cache capacity problem) as the initial components $\theta_i(l,L), \forall l,L, ~\mbox{and}~ i\in\{0,1,\ldots,K_{\max}\}$. This initialization allows our algorithm to start from a relatively good initial point, thus improving the convergence speed over a random initial parameter vector. For all the algorithms, an average of $5$ policy updates is taken as the policy update of any iteration. In each simulation setup, we select an appropriate step size by adjusting the step size at different runs until the best result is obtained. Finally, to test the performance of any algorithm and policy, we use a \emph{test data} of $100$ trajectories, each consisting of $5000$ time slots.

\subsection{Performance Evaluation}
\label{subsec:perform_eval}

\begin{figure}[!tbp]
\centering
  \subfloat[]{\includegraphics[scale=0.43]{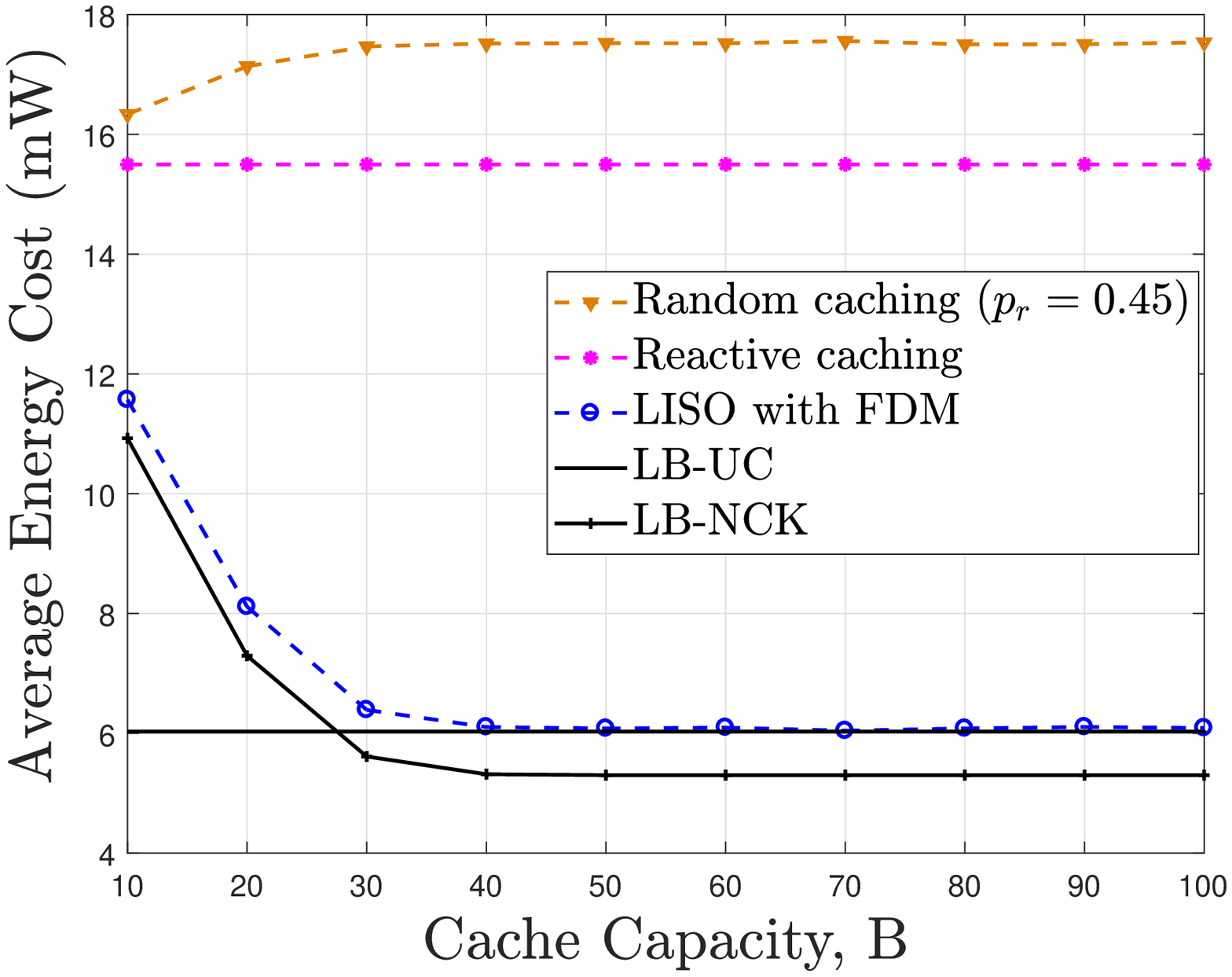}\label{fig:costvscap_LISO}}
  \subfloat[]{\includegraphics[scale=0.43]{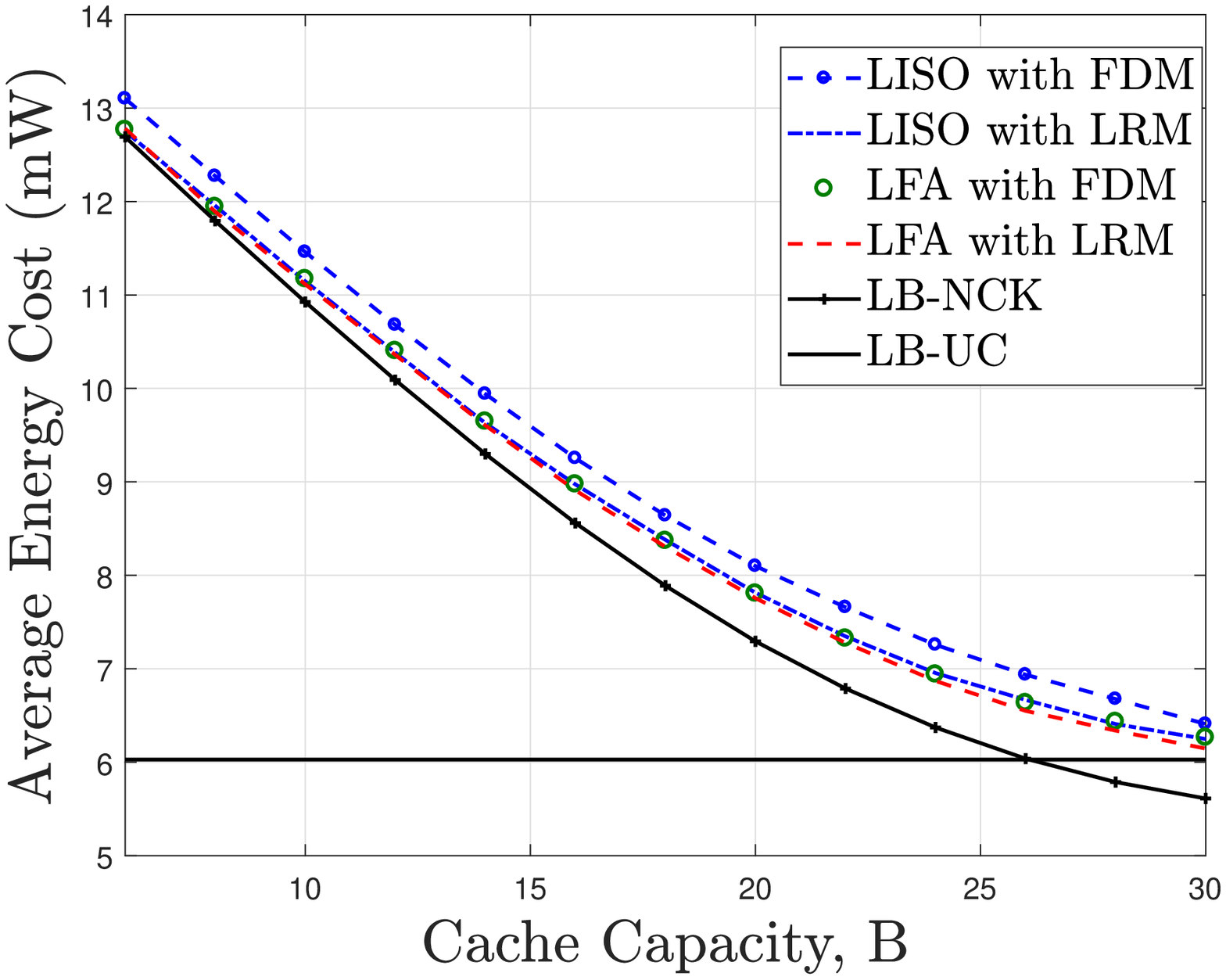}\label{fig:costvscap_All}}
  \caption{Average energy cost vs. cache capacity with $K_{\max} = 15, M_{max} = 8, p_a = 0.25$.}
\vspace{-0.5cm}
\end{figure}

To simplify the presentation, we first compare the LISO policy implemented with the FDM algorithm, with the benchmarks in Fig. \ref{fig:costvscap_LISO}, where we plot the average energy cost with respect to the cache capacity. 
We use $p_r = 0.45$ for the random caching scheme. We observe that the random caching scheme has the highest average energy cost, which increases with $p_r$. The average energy cost of the reactive scheme is independent of the cache capacity as it does not utilize the caches. While the reactive scheme only downloads contents that are actually requested, the random scheme downloads many contents that will eventually expire before being requested by the user. Moreover, since the random scheme downloads the contents randomly, it has, on average, the same cost as the reactive scheme for the contents that are eventually consumed by the user. As expected, the performance of LB-UC, which assumes unlimited cache capacity, does not depend on the cache capacity; while that of LB-NCK, which assumes non-causal knowledge of the user access times, decreases with the cache capacity. This is because more contents that will remain relevant by the user access time can be downloaded at favorable channel conditions through proactive caching. The performance of LB-NCK is equal to that of the reactive scheme when there is no cache memory available at the user, i.e., $B = 0$, since proactive caching is not possible in that case. 

The proposed LISO scheme significantly improves the system's performance with respect to the reactive caching scheme for any nonzero cache capacity. For a cache capacity of $B=30$, the LISO policy achieves an approximately $60\%$ reduction in energy consumption over the reactive scheme. For relatively large cache capacities, i.e., $B \geq 40$, the performance of the LISO scheme almost meets that of LB-UC. This is because more contents that will not expire by the time the user accesses the system can be stored in the cache, and almost no contents need to be removed from the cache.  This means that a cache capacity of $B=40$ is sufficient to provide all the potential gains from proactive caching in this setup. Interestingly, $40$ is roughly the average number of relevant contents at any point in time. Moreover, in the low cache capacity regime, the performance of the LISO scheme is very close to that of LB-NCK. This is because when the cache capacity is small, the system is relatively conservative in proactively caching contents, and so downloaded contents seldom expire or are swapped out of the cache before the next user access. Thus, the gain from knowing the user access times is more limited.
We conclude from Fig. \ref{fig:costvscap_LISO} that any  improvement in the performance of the LISO policy, when implemented with the FDM algorithm, can only occur at the low cache capacity regime ($B \leq 30$).

In Fig. \ref{fig:costvscap_All} we plot the performance of both caching schemes, LISO and LFA, implemented with both FDM and LRM algorithms, in the low cache capacity regime. 
We observe that both the policy representation using LFA and using the LRM algorithm for gradient estimation improve the performance compared to LISO with FDM. At very low cache capacities, i.e., $B < 10$, the performances of LISO with FDM, and LFA with FDM or LRM all closely follow the LB-NCK bound. Meanwhile, the LFA policy has a performance gain of up to $4.4\%$ over the LISO policy when both schemes are implemented with the FDM algorithm. This performance gain can be attributed to the fact that the LFA policy considers the remaining lifetimes of all the contents inside the cache when making a cache decision, which is ignored by the LISO policy. When the LFA policy is implemented with LRM, it achieves a performance gain of up to $5.6\%$ over the LISO policy implemented with FDM. LRM also improves the performance of LISO policy with up to $4.2\%$ with respect to LISO with FDM. We can attribute the better performance of LRM to its improved exploration strategy. 

\begin{figure}[t]
\centering
\includegraphics[scale=0.45]{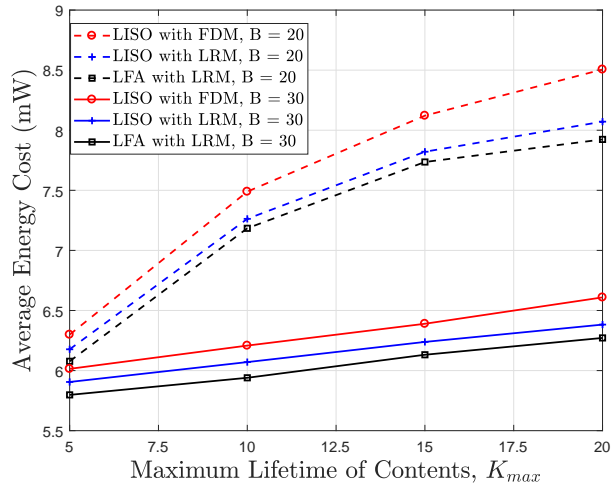}
\caption{Average energy cost vs. maximum lifetime of contents for $B=20, 30$, when $M_{max} = 8, p_a = 0.25$.}
\label{fig:costvslifetime}
\vspace{-0.5cm}
\end{figure}

In Fig. \ref{fig:costvslifetime} we plot the average energy cost against the maximum lifetime of contents, $K_{\max}$. We observe that the energy cost increases with the lifetime of contents. This is expected: when the contents remain relevant longer, more contents will be consumed by the user at the time of access. We observe that the improved performance of LFA policy over LISO extends to the $K_{\max}$ values considered here. 
The performance gain of LISO with LRM with respect to LISO with FDM increases with $K_{\max}$, which means that a better exploration strategy becomes more important as $K_{\max}$ increases, since the cache space becomes relatively more limited per relevant content.

In Fig. \ref{fig:convergence_rate} we compare the convergence rates of the two PG methods. We observe that, initially, LRM performs worse than FDM. However, after about $250$ trajectories, LRM starts to converge at a faster rate, saturating to the optimal performance after approximately $1000$ trajectories. The LRM is known to have better theoretical convergence guarantees, and its superiority over FDM has been observed in other applications as well \cite{Peters:2006}.  


\begin{figure}[t]
\centering
\includegraphics[scale=0.45]{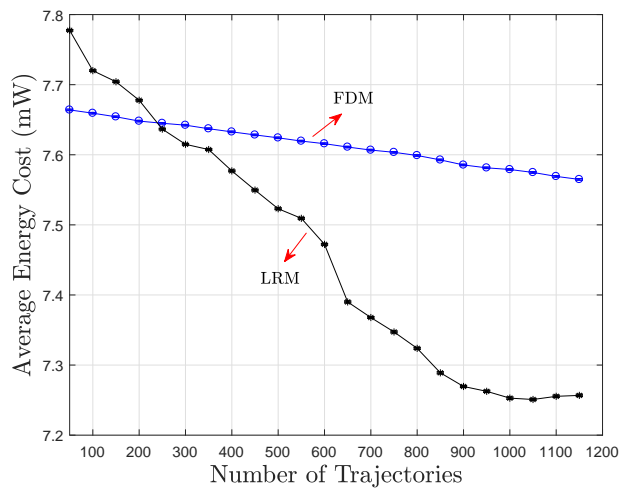}
\caption{The evolution of the FDM and the LRM algorithms with respect to the number of rollouts.}
\label{fig:convergence_rate}
\end{figure}

In the next section, we go beyond our modeling assumptions to show that the performance gain of the LFA policy with respect to LISO can be more significant if the underlying stochastic processes have memory.

\subsection{Stochastic Process with Memory}
\label{subsec:corr_process}

Here we introduce temporal memory in the generation of content lifetimes and the user location, i.e., distance from the serving BS, which is  a more realistic model for a mobile user served by micro BSs. For the lifetime process, we assume that the content generator has two states, called ``short'' and ``long'' content states, respectively. When it is in the short content state, all the generated contents have an initial lifetime of $5$, whereas in the long content state, all the contents are generated with a lifetime of $15$. The content generator transitions from one state to the other randomly. We assume that, if it is in the short content state, it remains there with probability $p_1$, while if it is in the long content state it remains in taht state with probability $p_2$. 

We assume that at each time slot the user moves either towards or away from the serving BS. The distance from the serving base station at time slot $t+1$ is $d_{t+1} = d_t \pm \sigma$, where $\sigma$ is a positive constant, which describes how fast the user is moving. The corresponding user location model is a Markov process where the state transition probabilities are given by $P(d_{t+1} = d_t + \sigma) = p_u$ and $P(d_{t+1} = d_t - \sigma) = 1 - p_u$, $\forall t$. We further impose that the distance $d$ is bounded between $50\mathrm{m}$ and $250\mathrm{m}$, so only a single direction of movement is possible  on the boundary points.

Note that the threshold structure of the optimal policy detailed in Section \ref{sec:optimal} no longer holds in this model with memory; however, we can still evaluate the performances of the proposed caching schemes which exploit a threshold structure. Figure \ref{fig:corr_avg_cost} shows the performances of the LFA and LISO policies, both implemented with the LRM algorithm, for values of $p_1 \in \{0.1,0.5,0.9\}$ and varying $p_2$ from $0.1 ~\mbox{to}~0.9$. Note that, higher $p_1$ and $p_2$ values mean that the system is more likely to stay in the same state, and continue to generate contents with the same lifetime. The results are obtained for a cache capacity of $B=20$. For the user location process, we set $\sigma = 5$ and $p_u = 0.5$. We observe that the average energy cost increases with increasing $p_2$ and with decreasing $p_1$, as they both lead to the generation of more contents with lifetime $15$. We observe similar trends for the gain of LFA with respect to LISO; that is, the improvement with respect to LISO also increases with $p_2$ and decreases with $p_1$.

\begin{figure}[t]
\centering
\includegraphics[scale=0.45]{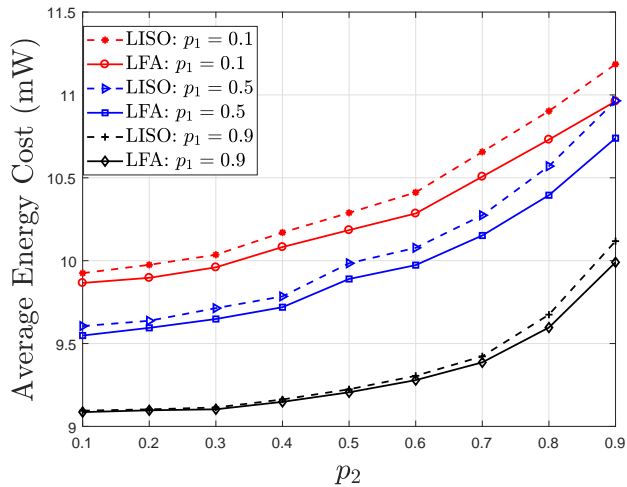}
\caption{Average energy cost vs. transition probabilities $p_1$ and $p_2$ for both LFA and LISO policies, with cache capacity $B=20$, $\sigma=5$ and $p_u=0.5$.}
\label{fig:corr_avg_cost}
\vspace{-0.5cm}
\end{figure}

We observe that the performance gain of LFA over LISO is more significant than the i.i.d. scenario. For similar system parameters in the i.i.d case; that is, for a cache capacity of $B=20$, and assuming that the LRM algorithm is used, LFA policy has a performance gain of approximately $0.75\%$ over LISO. However, when memory is introduced to the content lifetime and user location processes, the LFA policy can have a performance gain of approximately $2\%$. We note that, when the lifetime generation has memory, existing contents in the system provide more information about the future states; and hence, the LFA policy, which takes into account the remaining lifetimes of all the contents, provides larger gains.

\section{Conclusions}\label{s:conclusion}

We have considered the proactive caching problem in wireless networks with the aim of minimizing the long term average energy cost of delivering contents to the UE over a time-varying wireless link under random user accesses to the system, random content lifetime, and a time-varying library size. We have first showed the optimality of a threshold-based policy, which pushes contents to the cache (or may remove contents from the cache if it is full) depending on the relative value of the channel state with respect to preset threshold values that depend on the time elapsed since last user access and the remaining lifetimes of all the relevant contents in the system. Since this leads to a prohibitively large set of parameters to be optimized, we have proposed two suboptimal caching schemes, LISO and LFA, that are based on low-complexity parametrization of the system states and policy search techniques from reinforcement learning. We have further introduced two lower bounds on the performance, and through numerical simulations, we have showed that the two low-complexity proactive caching schemes perform close to optimal, with LFA performing better than LISO in general. Proactive caching under nonlinear cost functions, and in multi-user scenarios are currently being considered as interesting future extensions of this paper. 


\appendices

\section{Proof of Lemma \ref{lem:piecewise_constant}}
\label{app:proof-piecewise_constant}

We start the proof by showing that our MDP $(\S, \A, P, \mu)$ satisfies \eqref{eq:opt_diff_valuefunc} when $P^\pi$ is ergodic for any policy $\pi$ and $\Z$ is an interval. First note that Theorems 5.1--5.3 of \cite{Averageoptimality:survey} imply that for any MDP with a countable state space and whose action space is a compact metric space, there exists an optimal deterministic policy satisfying \eqref{eq:opt_diff_valuefunc}. Clearly, under our assumptions, $\S$ is countable. Furthermore, since $g \in \A$ is Borel-measurable, any limit point (under pointwise convergence) of a sequence of functions from $\A$ also belongs to $\A$ (i.e., it is a Borel-measurable function). On the other hand, the representation of policies with functions from $\A$ is not unique, since any two functions $g,g' \in \A$ such that $\PP{g(Z)= g'(Z)}=1$ represent the same policy (up to a zero-measure event), and this causes problems in establishing the compactness of $\A$. 

To alleviate this problem, for any $g \in \A$, define the equivalence class $\G_g=\{g'\in \G: \PP{g(Z)=g'(Z)}=1\}$, and let $\G=\{G_g: g \in \A\}$ denote the family of these classes. For any $G \in \G$, let $f_G \in G$ be a selected element of $G$. Then, since each function $f_G$ can take values only in the finite set $\A_{SI}$, with a slight modification to the proof of Theorem~3 in \cite{GyLi02}, one can show that the set $\bar{\A}=\{f_G: G \in \G\}$ is a compact metric space for the metric $\PP{g(Z) \neq g'(Z)}$. 


Consequently, the new MDP $(\S,\bar{\A},P,\mu)$ satisfies \eqref{eq:opt_diff_valuefunc}. Furthermore, it is easy to see that the new MDP is equivalent to the original one in the sense that their trajectories equal with probability one if any action $g \in \A$ in the original MDP is replaced with $f_{G_g}$ in the new one. Therefore, the original MDP also satisfies \eqref{eq:opt_diff_valuefunc}.

Using \eqref{eq:MDPtoMDP-SI_prob} and \eqref{eq:MDPtoMDP-SI_cost}, we can express \eqref{eq:opt_diff_valuefunc} as
\begin{equation}
V^{\pi^*}(s) = \min_{g \in \A}\left\{\EE{\mu_{SI}(s,g(Z),Z)- \rho^{\pi^*}+\sum_{s' \in \S} P(s'|s,g(Z)) V^{\pi^*}(s')}\right\},
\label{eq:MDPtoMDPSI-valuefunc}
\end{equation}
such that, if $g=\pi^*(s)$, then $g$ minimizes \eqref{eq:MDPtoMDPSI-valuefunc}

Since $g$ is a mapping from $\Z$, the above minimum can be realized by minimizing for each value of the side information $Z$ independently. Indeed, for any $s$, the minimum in \eqref{eq:MDPtoMDPSI-valuefunc} is achieved by any $g$ satisfying
\begin{equation}
g(z) \in \argmin_{a_{SI} \in \A_{SI,s}} \left\{\mu_{SI}(s,a_{SI},z) - \rho^{\pi^*} + \sum_{s' \in \S} P(s'|s,a_{SI}) V^{\pi^*}(s')\right\}.
\label{eq:MDPtoMDPSI-optpolicy}
\end{equation}

Since the right hand side of \eqref{eq:MDPtoMDPSI-optpolicy} is a minimum of finitely many linear functions, it follows that $g(z)$ can be chosen to be a piecewise constant function: a piecewise constant function over the interval $\Z$ is defined by an interval partition $\Z_1,\ldots,\Z_m$ of $\Z$ (for some $m$) and some actions $a_1,\ldots,a_m \in \A_{SI}$ such that $g(z)=a_i$ if $z \in \Z_i, i=1,\ldots,m$. Converting this policy back to the original MDP-SI problem finishes the proof.

\bibliographystyle{IEEEtran}
\bibliography{Journal_ref}
\end{document}